\documentclass{fundam}
\pdfoutput=1 % force arXiv to use pdflatex

\usepackage[T5,T1]{fontenc} % For Vietnamese diacritics;
                            % using \~{ê} breaks the copy/paste of PDF readers
\usepackage{cleveref}
\usepackage{bussproofs}
\usepackage{csquotes}
\usepackage{mdframed}
\usepackage{mathpartir}
\usepackage{todonotes}
\usepackage{amsmath,amssymb}
\usepackage{stmaryrd}
\usepackage{mathtools}

\usepackage{url} % takes care of hyperlinks, preferred over hyperref
\usepackage[ruled,lined]{algorithm2e}% provides Algorithm environment
\usepackage{graphicx}% allows for inclusion of graphic files (figures)
\usepackage{tikz}
\usetikzlibrary{automata, positioning, arrows}

% TODO prune useless macros

\newcommand{\bbN}{\mathbb{N}}

\newcommand{\atoms}{\mathbb{A}}
\newcommand{\Image}{\mathrm{Im}}

\newcommand{\cC}{\mathcal{C}}

\newcommand{\cI}{\mathcal{I}}

\newcommand{\pebble}{\mathsf{pebble}}
\newcommand{\blind}{\mathsf{blind}}

\newcommand{\Pebble}{\mathsf{Pebble}}
\newcommand{\Blind}{\mathsf{Blind}}
\newcommand{\MTT}{\mathsf{MTT}}
\newcommand{\SO}{\mathrm{SO}}
\newcommand{\yield}{{\mathtt{yield}}}

\newcommand{\relat}{\mathrm{R}}
\newcommand{\worig}{\genfrac{}{}{0pt}{}}

\theoremstyle{claimstyle}

\usepackage{todonotes}

\newcommand\withul{\lightning}

\newcommand\innsq{\mathtt{innsq}}

\begin{document}

%%%%%%%  to be filled in by copy-editor:  %%%%%%%%%%%
%
\setcounter{page}{1}
\publyear{2023}
\papernumber{???}
\volume{{\color{red}\textbf{submitted preprint}}}
\issue{???}
%
%%%%%%%%%%%%%%%%%%%%%%%%%%%%%%%%%

\title{Refutations of pebble minimization via output languages}

%\address{Address for correspondence goes here}

\author{Sandra Kiefer\\ University of Oxford, United Kingdom\\
  sandra.kiefer{@}cs.ox.ac.uk
  \and {\fontencoding{T5}\selectfont Lê Thành Dũng (Tito) Nguyễn}\\
  École normale supérieure de Lyon, France\\
  nltd{@}nguyentito.eu
  \and
  Cécilia Pradic\\ Swansea University, United Kingdom\\ c.pradic{@}swansea.ac.uk}

\maketitle

\runninghead{S.~Kiefer, L.~T.~D.~{\fontencoding{T5}\selectfont{}Nguyễn} and C.~Pradic}{Refuting pebble minimization via output languages}

\begin{abstract}
  Polyregular functions are the class of string-to-string functions definable by
  pebble transducers, an extension of finite-state automata with outputs and
  multiple two-way reading heads (pebbles) with a stack discipline. If a polyregular function can be computed with $k$ pebbles, then its output length is bounded by a polynomial of degree $k$ in the input
  length. But
  Bojańczyk has shown that the converse fails.

  In this paper, we provide two alternative easier proofs. The first establishes
  by elementary means that some quadratic polyregular function requires 3
  pebbles. The second proof -- just as short, albeit less elementary -- shows a
  stronger statement: for every~$k$, there exists some polyregular function with
  quadratic growth whose output language differs from that of any $k$-fold
  composition of macro tree transducers (and which therefore cannot be computed
  by a $k$-pebble transducer).
  Along the way, we also refute a conjectured logical characterization of polyblind functions.
\end{abstract}

\begin{keywords}
  polyregular functions, pebble transducers, macro tree transducers
\end{keywords}

\paragraph*{Acknowledgments}

We thank Mikołaj Bojańczyk for giving us much of the initial
inspiration that led to this paper, as well as Gaëtan Douéneau-Tabot and Nathan
Lhote for stimulating discussions.

{\footnotesize L.~T.~D.~{\fontencoding{T5}\selectfont{}Nguyễn} was supported by the LABEX MILYON (ANR-10-LABX-0070) of Université de Lyon, within the program \enquote{Investissements d'Avenir} (ANR-11-IDEX-0007) operated by the French National Research Agency (ANR).}

  \section{Introduction}

Many works about \emph{transducers} -- automata that can produce output string/trees, not just recognize languages -- are concerned with:
\begin{itemize}
  \item either well-known classes of \emph{linearly growing} functions, i.e.\ $|f(x)| = O(|x|)$ -- in the string-to-string case, those are the sequential, rational and regular functions, see e.g.~\cite{MuschollPuppis};
  \item or devices with possibly (hyper)exponential growth rates: HDT0L systems~\cite{FerteMarinSenizergues,CopyfulSST,Marble}, macro tree transducers (MTTs)~\cite{Macro}, compositions of MTTs (see below)…
\end{itemize}
A middle ground is occupied by the \emph{polyregular functions}, surveyed
in~\cite{PolyregSurvey}, which derive their name from their \emph{polynomial}
growth rate. They are the functions computed by string-to-string \emph{pebble
  transducers}. Although pebble transducers have been around for two decades
(starting with the tree-to-tree version in~\cite{Pebble}, see
also~\cite{EngelfrietPebbleMacro,PebbleString,PebbleComposition}), it is only in
the late 2010s that several alternative characterizations of polyregular
functions were introduced~\cite{polyregular,msoInterpretations}. This had led to
renewed interest in this robust function class (e.g.~\cite{Zpolyreg,JordonPhD,Reduplication}).
\begin{example}
  \label{exa:innsq}
  The \emph{inner squaring} function (from~{\cite[\S6.2]{PolyregSurvey}})
  defined as
  \[ \begin{array}{llcl}
    \innsq\colon& \{a,b,\#\}^* &\to& \{a,b,\#\}^*\\
    & w_0 \# \dots \# w_n &\mapsto& (w_0)^n \# \dots \# (w_n)^n \qquad (w_0,\dots,w_n \in \{a,b\}^*)
     \end{array} \]
   -- for instance, $\innsq(aba\#baa\#bb) = abaaba\#baabaa\#bbbb$ -- can be computed by the 3-pebble\footnote{\label{ftn:off-by-one}We
     follow the convention of newer papers for two-way string transducers are
     1-pebble transducers, while some older papers would have called them 0-pebble
     transducers; thus, $k\geq1$ here.} transducer that we describe in
   \Cref{sec:innsq}. Essentially, it uses a stack of at most 3 pointers to the
   inputs, called pebbles, in order to simulate 3 nested for-loops. This means
   that in this example, the pebbles only move in a left-to-right direction; but in
   general, they can choose to move to the left or to the right depending on the
   current state, unlike the indices in a for-loop.
\end{example}

A straightforward property of $k$-pebble transducers is that their growth rate
is\footnote{This applies to strings; for trees, $(\text{output \emph{height}}) =
  O((\text{input \emph{size}})^k)$~\cite[Lemma~7]{EngelfrietPebbleMacro}.}
$O(n^k)$, and this is the best possible bound. Hence the question of
\emph{pebble minimization}: if a polyregular function (that is, defined by some
$\ell$-pebble transducer) has growth $O(n^k)$, is it always computable by some
$k$-pebble transducer? Definitely not, as Bojańczyk recently
showed~\cite[Section~3]{PolyregGrowth}: no number of pebbles suffices to compute
all polyregular functions with \emph{quadratic} growth (but those of
\emph{linear} growth require a single pebble). Moreover, he uses the tools
from~\cite{PolyregGrowth} to show
in~\cite[Theorem~6.3]{PolyregSurvey}\footnote{The paper~\cite{PolyregGrowth}
  proposes a slightly different quadratic example that requires 3 pebbles,
  called \enquote{block squaring}.} that the inner squaring function requires 3
pebbles even though $|\innsq(w)| = O(|w|^2)$.

\paragraph{Contributions}

Bojańczyk's proof of the aforementioned results is long and technical. We
propose shorter and easier arguments:
\begin{itemize}
\item We reprove in \Cref{sec:innsq-non-min} that inner squaring requires 3 pebbles.
  Our proof depends only on a few old and familiar properties of \emph{regular
    functions} -- those computed by two-way transducers, i.e.\ \mbox{1-pebble}
  transducers -- such as their closure under composition~\cite{ChytilJ77} and an
  elementary pumping lemma~\cite{Rozoy} for their \emph{output languages} (their
  sets of possible output strings).
\item In \Cref{sec:engelfrieteries}, we use first-order interpretations
  (recalled in \Cref{sec:fo-interpretations}) to construct a certain sequence
  $(f_k)_{k\geq1}$ of quadratic polyregular functions. Then, we show that $f_k$
  requires $k+1$ pebbles, as a consequence of a stronger result: the output
  language of $f_k$ differs from that of any $k$-fold composition of macro tree
  transducers. (This composition hierarchy is quite canonical and well-studied,
  see \S\ref{sec:comp-mtt}.) Again, our proof is quite short,\footnote{And
    mostly unoriginal: it consists of little adjustments to an argument by
    Engelfriet \& Maneth~\cite[\S4]{PebbleString}. In a followup
    paper~\cite{PebbleComposition}, Engelfriet mentions and corrects a mistake
    in~\cite[\S3]{PebbleString}, but it does not affect the section which is
    relevant for our purposes.} and even arguably easier to check than our
  ad-hoc argument for $\innsq$, though it is less elementary since its
  \enquote{trusted base} is larger: we use a powerful \enquote{bridge theorem}
  on MTTs from~\cite{OutputMacro}.
\end{itemize}

\paragraph{On subclasses of polyregular functions}
Some restrictions on pebble transducers ensure that:
\begin{itemize}
  \item the computed function sits at a low level of the aforementioned composition hierarchy;
  \item a pebble minimization property holds: functions of growth $O(n^k)$
    require only $k$ pebbles.
\end{itemize}
This is for instance the case for \emph{$k$-marble transducers} -- they can
compute precisely the same string-to-string functions as MTTs\footnote{Using the
  fact that string-to-string MTTs are syntactically isomorphic, up to
  insignificant details, to the copyful streaming string transducers
  of~\cite{CopyfulSST}.} with growth $O(n^k)$~\cite[\S5]{Marble} -- or for
\emph{blind pebble transducers}, which define \emph{polyblind\footnote{A name
    given by
    Douéneau-Tabot~\cite{UnaryOutput,doueneautabot2022hiding,LastPebble} to what
    {\fontencoding{T5}\selectfont Nguyễn}, Noûs and Pradic~\cite{NguyenNP21}
    originally called the \enquote{comparison-free} subclass of polyregular
    functions.} functions} (cf.~Theorem~\ref{thm:blind-min}, taken
from~\cite{NguyenNP21}). Furthermore, Douéneau-Tabot has recently proved
pebble minimization for \enquote{last pebble} transducers~\cite{LastPebble}, a
model that subsumes both marble and blind pebble transducers; and a function
computed by any such device can be obtained by a composition of \emph{two}
MTTs\footnote{See also \cite{lvl3} for a characterization of the
  string-to-string functions computed by compositions of two MTTs.} (a
consequence of~\cite[Theorem~53~(in~\S15)]{InvisiblePebbles}).

Therefore, counterexamples to pebble minimization cannot be computed by
\enquote{blind pebble} or \enquote{last pebble} transducers. Based on this
observation, and using Example~\ref{exa:innsq}, we refute the conjecture
in~\cite{NguyenNP21} about a logical characterization of polyblind functions
(Theorem~\ref{thm:polyblind-refutation}).

\paragraph{Notations}

The set of natural numbers is $\bbN=\{0,1,\dots\}$. Alphabets are always finite sets.

Let $\Sigma$ be an alphabet. We write $\varepsilon$ for the empty string and $\Sigma^*$ for the set of strings (or words) with letters in $\Sigma$, i.e.\ the free monoid over $\Sigma$; the Kleene star $(-)^*$ will also be applied to languages $L \subseteq \Sigma^*$ as part of usual regular expression syntax. 

Let $w = w_1 \dots w_n \in \Sigma^*$ ($w_i \in \Sigma$ for $i\in\{1,\dots,n\}$);
we also write $w[i] = w_i$. The length $|w|$ of $w$ is $n$, and $|w|_c$ refers
to the number of occurrences of $c\in\Sigma$ in $w$.

The \emph{output language} of a function $f \colon X \to \Sigma^*$ is $f(X) \subseteq \Sigma^*$, also denoted by $\Image(f)$.

\section{An example of string-to-string pebble transducer}
\label{sec:innsq}

We describe informally a pebble transducer that computes the inner squaring
function of Example~\ref{exa:innsq}.

The machine has a finite-state control, and starts with a single pebble
(pointer to the input) $\downarrow$ initialized to the first position in the
input. The first thing it does is to \emph{push} a second pebble $\Downarrow$
on its \emph{stack of pebbles}, resulting in the following configuration:
\begin{center}
  \begin{tikzpicture}
    \draw[thick] (0,0.4) -- (0,1) -- (12,1) -- (12,0.4) -- (0,0.4);
    \draw[thick] (1,0.4) -- (1,1);
    \draw[thick] (2,0.4) -- (2,1);
    \draw[thick] (3,0.4) -- (3,1);
    \draw[thick] (4,0.4) -- (4,1);
   \draw[thick] (5,0.4) -- (5,1);
    \draw[thick] (6,0.4) -- (6,1);
    \draw[thick] (7,0.4) -- (7,1);
    \draw[thick] (8,0.4) -- (8,1);
    \draw[thick] (9,0.4) -- (9,1);
    \draw[thick] (10,0.4) -- (10,1);
    \draw[thick] (11,0.4) -- (11,1);
    \node at (0.5,0.7) {\large $\triangleright$};
    \node at (1.5,0.7) {\large $a$};
    \node at (2.5,0.7) {\large $b$};
    \node at (3.5,0.7) {\large $a$};
    \node at (4.5,0.7) {\large $\#$};
    \node at (5.5,0.7) {\large $b$};
    \node at (6.5,0.7) {\large $a$};
    \node at (7.5,0.7) {\large $a$};
    \node at (8.5,0.7) {\large $\#$};
    \node at (9.5,0.7) {\large $b$};
    \node at (10.5,0.7) {\large $b$};
    \node at (11.5,0.7) {\large $\triangleleft$};
    \node at (1.5,1.4) {\large $\downarrow$};
    \node at (1.5,2) {\large $\Downarrow$};
  \end{tikzpicture}
\end{center}
The goal of the second pebble is to count the $\#$s in the input. The
transducer moves it forward until it reaches a $\#$, at which point it
pushes the third pebble $\triangledown$:
\begin{center}
  \begin{tikzpicture}
    \draw[thick] (0,0.4) -- (0,1) -- (12,1) -- (12,0.4) -- (0,0.4);
    \draw[thick] (1,0.4) -- (1,1);
    \draw[thick] (2,0.4) -- (2,1);
    \draw[thick] (3,0.4) -- (3,1);
    \draw[thick] (4,0.4) -- (4,1);
    \draw[thick] (5,0.4) -- (5,1);
    \draw[thick] (6,0.4) -- (6,1);
    \draw[thick] (7,0.4) -- (7,1);
    \draw[thick] (8,0.4) -- (8,1);
    \draw[thick] (9,0.4) -- (9,1);
    \draw[thick] (10,0.4) -- (10,1);
    \draw[thick] (11,0.4) -- (11,1);
    \node at (0.5,0.7) {\large $\triangleright$};
    \node at (1.5,0.7) {\large $a$};
    \node at (2.5,0.7) {\large $b$};
    \node at (3.5,0.7) {\large $a$};
    \node at (4.5,0.7) {\large $\#$};
    \node at (5.5,0.7) {\large $b$};
    \node at (6.5,0.7) {\large $a$};
    \node at (7.5,0.7) {\large $a$};
    \node at (8.5,0.7) {\large $\#$};
    \node at (9.5,0.7) {\large $b$};
    \node at (10.5,0.7) {\large $b$};
    \node at (11.5,0.7) {\large $\triangleleft$};
    \node at (1.5,2.6) {\large $\triangledown$};
    \node at (4.5,2) {\large $\Downarrow$};
    \node at (1.5,1.4) {\large $\downarrow$};
  \end{tikzpicture}
\end{center}
The goal of this third pebble $\triangledown$ is to copy the input block
that the first pebble $\downarrow$ points to. Therefore, it is going to move
forward while copying each input letter to the output, until it reaches
$\#$:
\begin{center}
  \begin{tikzpicture}
    \draw[thick] (0,0.4) -- (0,1) -- (12,1) -- (12,0.4) -- (0,0.4);
    \draw[thick] (1,0.4) -- (1,1);
    \draw[thick] (2,0.4) -- (2,1);
    \draw[thick] (3,0.4) -- (3,1);
    \draw[thick] (4,0.4) -- (4,1);
    \draw[thick] (5,0.4) -- (5,1);
    \draw[thick] (6,0.4) -- (6,1);
    \draw[thick] (7,0.4) -- (7,1);
    \draw[thick] (8,0.4) -- (8,1);
    \draw[thick] (9,0.4) -- (9,1);
    \draw[thick] (10,0.4) -- (10,1);
    \draw[thick] (11,0.4) -- (11,1);
    \node at (0.5,0.7) {\large $\triangleright$};
    \node at (1.5,0.7) {\large $a$};
    \node at (2.5,0.7) {\large $b$};
    \node at (3.5,0.7) {\large $a$};
    \node at (4.5,0.7) {\large $\#$};
    \node at (5.5,0.7) {\large $b$};
    \node at (6.5,0.7) {\large $a$};
    \node at (7.5,0.7) {\large $a$};
    \node at (8.5,0.7) {\large $\#$};
    \node at (9.5,0.7) {\large $b$};
    \node at (10.5,0.7) {\large $b$};
    \node at (11.5,0.7) {\large $\triangleleft$};
    \node at (4.5,2.6) {\large $\triangledown$};
    \node at (4.5,2) {\large $\Downarrow$};
    \node at (1.5,1.4) {\large $\downarrow$};
  \end{tikzpicture}\\
  Current output: $aba$
\end{center}
Now, in order to count the number of $\#$s -- which is the number of copies
of $aba$ that need to be outputted -- we would like to move the second
pebble $\Downarrow$ forward. The \emph{stack restriction} on pebble
transducers says that \emph{only the topmost pebble can be moved}.
Therefore, the transducer has to \emph{pop} the third pebble
$\triangledown$, \emph{forgetting its position}, so that $\Downarrow$
becomes free to move to the second $\#$:
\begin{center}
  \begin{tikzpicture}
    \draw[thick] (0,0.4) -- (0,1) -- (12,1) -- (12,0.4) -- (0,0.4);
    \draw[thick] (1,0.4) -- (1,1);
    \draw[thick] (2,0.4) -- (2,1);
    \draw[thick] (3,0.4) -- (3,1);
    \draw[thick] (4,0.4) -- (4,1);
    \draw[thick] (5,0.4) -- (5,1);
    \draw[thick] (6,0.4) -- (6,1);
    \draw[thick] (7,0.4) -- (7,1);
    \draw[thick] (8,0.4) -- (8,1);
    \draw[thick] (9,0.4) -- (9,1);
    \draw[thick] (10,0.4) -- (10,1);
    \draw[thick] (11,0.4) -- (11,1);
    \node at (0.5,0.7) {\large $\triangleright$};
    \node at (1.5,0.7) {\large $a$};
    \node at (2.5,0.7) {\large $b$};
    \node at (3.5,0.7) {\large $a$};
    \node at (4.5,0.7) {\large $\#$};
    \node at (5.5,0.7) {\large $b$};
    \node at (6.5,0.7) {\large $a$};
    \node at (7.5,0.7) {\large $a$};
    \node at (8.5,0.7) {\large $\#$};
    \node at (9.5,0.7) {\large $b$};
    \node at (10.5,0.7) {\large $b$};
    \node at (11.5,0.7) {\large $\triangleleft$};
    \node at (8.5,2) {\large $\Downarrow$};
    \node at (1.5,1.4) {\large $\downarrow$};
  \end{tikzpicture}\\
  Current output: $aba$
\end{center}
Next, $\triangledown$ is pushed again -- reinitialized to the first input
position -- in order to output a copy of $aba$ as before. After this,
$\triangledown$ is popped and $\Downarrow$ moves forward, until it reaches
the end of the word:
\begin{center}
  \begin{tikzpicture}
    \draw[thick] (0,0.4) -- (0,1) -- (12,1) -- (12,0.4) -- (0,0.4);
    \draw[thick] (1,0.4) -- (1,1);
    \draw[thick] (2,0.4) -- (2,1);
    \draw[thick] (3,0.4) -- (3,1);
    \draw[thick] (4,0.4) -- (4,1);
    \draw[thick] (5,0.4) -- (5,1);
    \draw[thick] (6,0.4) -- (6,1);
    \draw[thick] (7,0.4) -- (7,1);
    \draw[thick] (8,0.4) -- (8,1);
    \draw[thick] (9,0.4) -- (9,1);
    \draw[thick] (10,0.4) -- (10,1);
    \draw[thick] (11,0.4) -- (11,1);
    \node at (0.5,0.7) {\large $\triangleright$};
    \node at (1.5,0.7) {\large $a$};
    \node at (2.5,0.7) {\large $b$};
    \node at (3.5,0.7) {\large $a$};
    \node at (4.5,0.7) {\large $\#$};
    \node at (5.5,0.7) {\large $b$};
    \node at (6.5,0.7) {\large $a$};
    \node at (7.5,0.7) {\large $a$};
    \node at (8.5,0.7) {\large $\#$};
    \node at (9.5,0.7) {\large $b$};
    \node at (10.5,0.7) {\large $b$};
    \node at (11.5,0.7) {\large $\triangleleft$};
    \node at (11.5,2) {\large $\Downarrow$};
    \node at (1.5,1.4) {\large $\downarrow$};
  \end{tikzpicture}\\
  Current output: $abaaba$
\end{center}
When this happens, the transducer pops $\Downarrow$ and moves $\downarrow$
forward to the beginning of the second block $baa$. Then $\Downarrow$ is pushed again
to count the $\#$s:
\begin{center}
  \begin{tikzpicture}
    \draw[thick] (0,0.4) -- (0,1) -- (12,1) -- (12,0.4) -- (0,0.4);
    \draw[thick] (1,0.4) -- (1,1);
    \draw[thick] (2,0.4) -- (2,1);
    \draw[thick] (3,0.4) -- (3,1);
    \draw[thick] (4,0.4) -- (4,1);
    \draw[thick] (5,0.4) -- (5,1);
    \draw[thick] (6,0.4) -- (6,1);
    \draw[thick] (7,0.4) -- (7,1);
    \draw[thick] (8,0.4) -- (8,1);
    \draw[thick] (9,0.4) -- (9,1);
    \draw[thick] (10,0.4) -- (10,1);
    \draw[thick] (11,0.4) -- (11,1);
    \node at (0.5,0.7) {\large $\triangleright$};
    \node at (1.5,0.7) {\large $a$};
    \node at (2.5,0.7) {\large $b$};
    \node at (3.5,0.7) {\large $a$};
    \node at (4.5,0.7) {\large $\#$};
    \node at (5.5,0.7) {\large $b$};
    \node at (6.5,0.7) {\large $a$};
    \node at (7.5,0.7) {\large $a$};
    \node at (8.5,0.7) {\large $\#$};
    \node at (9.5,0.7) {\large $b$};
    \node at (10.5,0.7) {\large $b$};
    \node at (11.5,0.7) {\large $\triangleleft$};
    \node at (4.5,2) {\large $\Downarrow$};
    \node at (5.5,1.4) {\large $\downarrow$};
  \end{tikzpicture}\\
  Current output: $abaaba\#$
\end{center}
At this point, the third pebble $\triangledown$ is pushed again, its purpose
being to copy the block $baa$ to the output. To do so, it moves forward until it
reaches the beginning of that block, pointed by $\downarrow$:
\begin{center}
  \begin{tikzpicture}
    \draw[thick] (0,0.4) -- (0,1) -- (12,1) -- (12,0.4) -- (0,0.4);
    \draw[thick] (1,0.4) -- (1,1);
    \draw[thick] (2,0.4) -- (2,1);
    \draw[thick] (3,0.4) -- (3,1);
    \draw[thick] (4,0.4) -- (4,1);
    \draw[thick] (5,0.4) -- (5,1);
    \draw[thick] (6,0.4) -- (6,1);
    \draw[thick] (7,0.4) -- (7,1);
    \draw[thick] (8,0.4) -- (8,1);
    \draw[thick] (9,0.4) -- (9,1);
    \draw[thick] (10,0.4) -- (10,1);
    \draw[thick] (11,0.4) -- (11,1);
    \node at (0.5,0.7) {\large $\triangleright$};
    \node at (1.5,0.7) {\large $a$};
    \node at (2.5,0.7) {\large $b$};
    \node at (3.5,0.7) {\large $a$};
    \node at (4.5,0.7) {\large $\#$};
    \node at (5.5,0.7) {\large $b$};
    \node at (6.5,0.7) {\large $a$};
    \node at (7.5,0.7) {\large $a$};
    \node at (8.5,0.7) {\large $\#$};
    \node at (9.5,0.7) {\large $b$};
    \node at (10.5,0.7) {\large $b$};
    \node at (11.5,0.7) {\large $\triangleleft$};
    \node at (5.5,2.6) {\large $\triangledown$};
    \node at (4.5,2) {\large $\Downarrow$};
    \node at (5.5,1.4) {\large $\downarrow$};
  \end{tikzpicture}\\
  Current output: $abaaba\#$
\end{center}
The transducer can know that it has reached the beginning of the block it wants
to copy because it is allowed to \emph{compare the positions} of $\triangledown$
and $\downarrow$. The third pebble $\triangledown$ can now move forward to
output $baa$, and the execution continues as the reader might expect.

\paragraph{Moral of the story}

Note that every time an output letter is produced, the first pebble $\downarrow$
must be on the leftmost position of the block in which the third pebble
$\triangledown$ is. Thus, the output positions can be morally parameterized by
pairs of input positions, corresponding to $\Downarrow$ and $\triangledown$;
this is why the transducer has only quadratic growth, despite its 3 pebbles.
(This idea will make an appearance again in Example~\ref{exa:innsq-interp}.) The
first pebble $\downarrow$ is redundant, but this redundancy is made necessary by
the stack condition: the role of $\downarrow$ could be seen as keeping some
partial information about $\triangledown$ while $\Downarrow$ moves, since to be
allowed to move $\Downarrow$, first $\triangledown$ must be popped.

\section{A simple proof that inner squaring requires 3 pebbles}
\label{sec:innsq-non-min}

The 3-pebble transducer described in the previous section computes the inner
squaring function $\innsq$ (Example~\ref{exa:innsq}); we will see in
Example~\ref{exa:innsq-pebble} how to turn this into a more formal argument for
the fact that $\innsq \in \Pebble_3$. The goal of this section is to reprove
{\cite[Theorem~6.3]{PolyregSurvey}}:
\begin{theorem}
  \label{thm:counterexample}
  Inner squaring cannot be computed with 2 pebbles: $\innsq \notin \Pebble_2$.
\end{theorem}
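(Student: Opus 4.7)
The plan is to suppose $\innsq \in \Pebble_2$ and derive a contradiction via the two advertised tools: closure of regular functions under composition~\cite{ChytilJ77} and Rozoy's pumping lemma for their output languages~\cite{Rozoy}.

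First, I would use the closure of $\Pebble_2$ under pre- and post-composition with regular functions to isolate a simpler $\Pebble_2$-computable function. Restricting $\innsq$ to the regular sub-language of inputs of the form $u\#^n$ (on which $\innsq(u\#^n) = u^n \#^n$) and post-composing with a regular function that extracts the first block yields a $\Pebble_2$ function $F$ with $F(u\#^n) = u^n$.

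Second, I would extract a genuine regular function from the hypothetical 2-pebble realisation of $F$ by exploiting the stack discipline of pebble transducers. A 2-pebble transducer factors into an outer 1-pebble routine that, at various input positions, invokes an inner 1-pebble routine operating on the input with the current outer-pebble position marked; the inner routine is a regular function, and the full output decomposes as a concatenation of inner outputs interleaved with direct outputs of the outer routine. Because the inner routine has linear growth, a single inner call cannot by itself produce $u^n$ when $|u|$ and $n$ are both large (say comparable), so the outer routine must orchestrate many inner calls. A case analysis on how the factors of $u^n$ are distributed across the inner calls should then show that some regular function derived from the construction has an image containing the set $L = \{u^n : u \in \{a,b\}^*,\ n \geq 2\}$ of proper-power words.

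Third, I would apply Rozoy's pumping lemma to obtain the contradiction. Choosing $u$ primitive of length much greater than the pumping constant $K$ and $n \geq 2$, the word $u^n \in L$ must admit a pumpable factor $y$ with $1 \leq |y| \leq K < |u|$ whose iteration stays in the image. But any such $y$ lies strictly within a single $u$-period, and iterating it breaks the periodic structure and produces a string that is no longer a proper power of any primitive root (in particular, it is not in $L$), contradicting membership in the image.

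The main obstacle is making the second step rigorous: showing that the inner regular routine (or a related regular function built from the computation) must genuinely produce the forbidden set of proper-power words rather than distributing their production in some cleverer way that evades the extraction. A plausible route is to use the classical theorem that polyregular functions of linear growth collapse to regular functions, applied to a diagonal sub-domain (e.g.\ $|u| = n$) in which the outer pebble's orchestration is forced to factor through a regular transformation, so that the inner routine must bear the essential quadratic work alone.
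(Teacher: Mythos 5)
Your overall strategy -- build a regular image from a hypothetical 2-pebble computation of $\innsq$ and then contradict Rozoy's pumping lemma -- is the same as the paper's, and both rely on linear growth of regular functions plus closure under composition. But the crucial second step of your plan, as written, cannot be made to work, and your proposed fix does not apply.

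The gap is in the claim that ``some regular function derived from the construction has an image containing the set $L = \{u^n : u \in \{a,b\}^*,\ n \geq 2\}$.'' In the decomposition $\pebble(g^\circ,(h_i))$, the only regular functions available are the inner routines $h_i$, and their outputs $v_\ell = h_{i_\ell}(u \withul j_\ell)$ have length $O(|\text{input}|)$. On inputs $u\#^n$ with $|u|$ and $n$ both large (e.g.\ comparable), a single $v_\ell$ has length $O(|u|+n)$, whereas $|u^n| = n|u|$; so no single inner call can produce a full power $u^n$, let alone all of $L$. The pigeonhole on total output length only guarantees an inner call of length roughly $(\text{output length})/(\text{number of calls}) = O(|u|)$ -- about one period, not two, which is not enough to exploit Fine--Wilf after pumping a short factor of length $\leq K < |u|$. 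Your suggested repair -- invoking the collapse of linear-growth polyregular functions to regular ones on the diagonal $|u| = n$ -- does not apply: on that sub-domain the function $u\#^{|u|} \mapsto u^{|u|}$ still has quadratic growth, so the collapse theorem gives nothing.

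The paper sidesteps this by choosing a more structured input, $u = (a^n b\#)^n\#^{nm}$, whose output $\innsq(u) = ((a^n b)^{n(1+m)}\#)^n \#^{nm}$ has two crucial features. First, the linear bound $|v_\ell| \leq C|u| = Cn(n+2+m)$ is \emph{smaller} than the distance $(n+1)\cdot n(1+m)$ between consecutive non-adjacent $\#$s in the output, forcing $|v_\ell|_\# \leq 1$ for every inner call. Second, the pigeonhole is applied to the count of $b$s rather than total length: the prefix up to the $(n{-}1)$-st $\#$ has $n(1+m)(n-1)$ occurrences of $b$ but is produced by at most $Cn(n+2+m)$ calls, and the ratio $\frac{n(1+m)(n-1)}{Cn(n+2+m)}$ can be made arbitrarily large by taking both $n$ and $m$ large (this extra tunable parameter $m$ is what you are missing). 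This yields a single $\#$-free inner-call factor that, after trimming, has the form $b(a^n b)^r$ with $n,r \geq N$ for any prescribed $N$. The target language $L$ to which Rozoy's lemma is applied then consists of such trimmed $\#$-free factors, which by the structure of $\innsq((a^*b\#)^*\#^*)$ must have all blocks of $a$s of the same size -- exactly the property that pumping a short factor $v_i$ (either introducing a wrong-sized $a$-block, or lengthening while not adding any $b$) destroys. Your primitivity/periodicity intuition is aiming at the right kind of contradiction, but you need to target these short structured factors produced by a single inner call, not the full power $u^n$.
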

\begin{remark}
  However, one can show that the restriction of $\innsq$ to inputs in $\{a,\#\}^*$ is in $\Pebble_2$.  
\end{remark}
\begin{remark}\label{rem:deatomization}
  Bojańczyk's proof sketch for~\cite[Theorem~6.3]{PolyregSurvey} actually shows that the function
  $a_1 \dots a_n \in \atoms^* \mapsto (a_1)^n \dots (a_n)^n$
  cannot be computed by a 2-pebble \emph{atom-oblivious} transducer (here the alphabet $\atoms$ is an \emph{infinite} set of \emph{atoms}). Combining this with the Deatomization Theorem from~\cite{PolyregGrowth} -- whose proof is rather complicated -- shows that no $f \colon \{\langle,\rangle,\bullet\}^* \to \{\langle,\rangle,\bullet\}^*$ in $\Pebble_2$ can satisfy
  $f(\langle \bullet^{p_1} \rangle \dots \langle \bullet^{p_n} \rangle) = (\langle \bullet^{p_1} \rangle)^n \dots (\langle \bullet^{p_n} \rangle)^n$,
  from which Theorem~\ref{thm:counterexample} can be deduced using the composition properties of pebble transducers~\cite{PebbleComposition}.
\end{remark}

Instead of introducing the hierarchy $(\Pebble_k)_{k\in\bbN}$ by a concrete
machine model, we define it in \Cref{sec:pebble-def} using
combinators (operators on functions). This abstract presentation depends on the
\emph{origin semantics}~\cite{origin} (see also~\cite[\S5]{MuschollPuppis}) of
regular functions, about which we say a few words in \Cref{sec:reg-origins}. We
also recall in~\S\ref{sec:pebble-def} a similar definition of polyblind
functions; their only use in this section will be to state and prove
Corollary~\ref{cor:innsq-not-polyblind}, but they will appear again briefly in
\Cref{sec:fo-interpretations}. After all this, \Cref{sec:innsq-proof} proves
Theorem~\ref{thm:counterexample}.

\subsection{Regular functions with origin information}
\label{sec:reg-origins}

Regular functions are those computed by 1-pebble transducers, also known as
\emph{two-way transducers} (2DFTs). We will mostly avoid explicit manipulations
of 2DFTs; this is why we do not recall a formal definition here, and refer to the
survey~\cite{MuschollPuppis} for more on regular functions and their origin
semantics. A typical example of a regular string-to-string function is:
\[ a^{m_0}\# \dots \# a^{m_n} \in \{a,\#\}^* \mapsto a^{m_n} b^{m_n} \# \dots \# a^{m_0} b^{m_0} \]
There are several natural ways to lift it to a regular function with origins, which reflect different ways to compute it with a 2DFT: for example $aaa\#aa$ could be mapped to
\begin{align*}
  \worig{a}{5}\worig{a}{6}\worig{b}{5}\worig{b}{6}
  \worig{\#}{4}
  \worig{a}{1}\worig{a}{2}\worig{a}{3}\worig{b}{1}\worig{b}{2}\worig{b}{3}
  \quad\text{or}\quad
  \worig{a}{5}\worig{a}{6}\worig{b}{6}\worig{b}{5}
  \worig{\#}{4}
  \worig{a}{1}\worig{a}{2}\worig{a}{3}\worig{b}{3}\worig{b}{2}\worig{b}{1}
  \end{align*}
The second component indicates, for each output letter, which input position it
\enquote{comes from} -- that is, where the reading head was placed when the
letter was outputted.

We shall write $f^\circ,g^\circ,\ldots \colon \Gamma^* \to (\Sigma\times\bbN)^*$ for regular functions with origin information and $f,g,\ldots \colon \Gamma^* \to \Sigma^*$ for the corresponding regular string-to-string functions. Thus, $f = (\pi_1)^* \circ f^\circ$ and $(\pi_2)^* \circ f(w) \in \{1,\dots,|w|\}^*$ for any $w\in\Gamma^*$, where $\pi_i$ is the projection on the $i$-th component of a pair ($i \in \{1,2\}$).

\begin{remark}
  The notion of origin semantics can be a bit problematic when the empty string
  $\varepsilon$ has a non-empty image. This is why, in this section, we consider
  that \textbf{all our (poly)regular functions map $\varepsilon$ to
    $\varepsilon$} in order to avoid inessential inconveniences. This makes no
  difference concerning the strength of our result, since
  $\innsq(\varepsilon)=\varepsilon$.
\end{remark}

\subsection{Pebble transducers in an abstract style and polyblind functions}
\label{sec:pebble-def}

We denote by $\underline\Sigma$ a disjoint copy of $\Sigma$,
made of \enquote{underlined letters}, so that $a \in \Sigma \mapsto
\underline{a} \in \underline\Sigma$ is a bijection. We also write
$w\withul i$ for $w[1]\ldots w[i-1] \underline{w[i]} w[i+1]\ldots w[n] \in
(\Sigma \cup \underline\Sigma)^*$ where $n = |w|$.

\begin{definition}
  Let $f^\circ \colon \Gamma^* \to (I\times\bbN)^{*}$ be a regular function with origins
 (so $f: \Gamma^* \to I^{*}$ is regular) and, for $i \in I$, let $g_{i} \colon (\Gamma \cup \underline\Gamma)^{*} \to \Sigma^{*}$ and $h_i \colon \Gamma^* \to \Sigma^*$. For $w \in \Gamma^*$, define
  \begin{align*}
     \pebble(f^\circ,(g_{i})_{i\in I})(w) &= g_{i_{1}}(w\withul j_{1}) \cdot \ldots \cdot g_{i_{n}}(w\withul j_{n}) \quad\mathrm{where}\quad f^\circ(w) = (i_{1},j_{1}) \dots (i_{n},j_{n})\\
     \blind(f,(h_{i})_{i\in I})(w) &= h_{i_{1}}(w) \cdot \ldots \cdot h_{i_{n}}(w)
  \end{align*}
  ($\mathsf{blind}$ is called \enquote{composition by substitution} in~\cite[\S4]{NguyenNP21}). Using these combinators, we define the hierarchies of function classes $\Pebble_n$ and $\Blind_n$ inductively.
  $\Pebble_0 = \Blind_0$ is the class of string-to-string functions with finite range (or equivalently, whose output has bounded length), and
  \begin{align*}
    \forall n\in\bbN,\quad \Pebble_{k+1} &= \{ \pebble(f^\circ, (g_i)) \mid f^\circ\ \text{regular},\; g_i \in \Pebble_k\}\\
     \Blind_{k+1} &= \{  \blind(f, (g_i)) \mid f\ \text{regular},\; g_i \in \Blind_k\}
  \end{align*}
\end{definition}

Morally, the correspondence with the informal presentation of $k$-pebble
transducers in \Cref{sec:innsq} is that we can compute
$\pebble(f^\circ,(g_{i})_{i\in I})$ by tweaking some two-way transducer
$\mathcal{T}$ computing $f^\circ$: each time $\mathcal{T}$ would output a letter
$i$, we instead \enquote{call a subroutine} computing $g_i(w\withul j)$ where
$j$ is the current position of the reading head. If every $g_i$ is implemented
using a stack of $k$ pebbles, then we can implement the subroutine-calling
machine with a stack of height $k+1$: the pebble at the bottom of the stack
corresponds to the reading head of $\mathcal{T}$. Conversely, in any pebble
transducer, pushing a pebble can be seen as initiating a subroutine call.

\begin{example}\label{exa:innsq-pebble}
  Let us turn the 3-pebble transducer of
  \Cref{sec:innsq} into an \enquote{abstract-style} proof that
  $\innsq\in\Pebble_3$.
  Let $f^\circ \colon \{a,b,\#\}^* \to (\{\bullet,\#\}\times\bbN)^*$ be defined by
  \[ f^\circ(\underbrace{w[1] \dots w[i_1-1]}_{\mathclap{\text{each block is in}\ \{a,b\}^*}} \# \dots \# w[i_m+1] \dots w[n]) = \worig{\bullet}{1}\worig{\#}{i_1}\underbrace{\worig{\bullet}{i_1+1}}_{\mathclap{\text{each $\bullet$ is omitted if the corresponding input block is empty}}}\dots\worig{\#}{i_m}\worig{\bullet}{i_m+1} \]
  and $h(\dots\#\underline{c}w\#\dots) = cw$ for $c\in\{a,b\}$ and $w\in\{a,b\}^*$. Then
  \[ \underbrace{\innsq}_{\mathclap{\text{Example~\ref{exa:innsq}}}} = \underbrace{\pebble\big(\overbrace{f^\circ}^{\mathclap{\text{regular with origin information}}}, (g_i)_{i\in\{\bullet,\#\}}\big)}_{\in\Pebble_3} \qquad g_\bullet = \underbrace{\blind\big(\overbrace{w \mapsto \bullet^{|w|_\#}, (h)_{j\in\{\bullet\}}}^{\mathclap{\text{both regular i.e.\ in}\ \Pebble_1}}\big)}_{\in\Blind_2\subset\Pebble_2}, \; g_\# \colon \underbrace{w \mapsto \#}_{\mathclap{\in\Pebble_0}} \]
\end{example}
The fact that $\Blind_k$ consists of the functions computed by blind $k$-pebble
transducers -- that \emph{cannot compare the positions of their pebbles} -- is stated
in~\cite[\S5]{NguyenNP21}, with a detailed proof available
in~\cite[Appendix~D]{NguyenNP21} refining the intuitions on
\enquote{subroutines} given above. By a straightforward adaptation of that
proof:
\begin{proposition}
  $\Pebble_k$ is exactly the class of string-to-string functions computed by \emph{$k$-pebble transducers}~\cite[\S2]{PolyregSurvey} for any $k\geq1$.
\end{proposition}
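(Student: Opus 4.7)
The plan is to prove both inclusions by induction on $k \geq 1$, in parallel with the analogous result for $\Blind_k$ in~\cite[Appendix~D]{NguyenNP21}. The key adaptation compared to the blind case is to handle the pebble comparison feature via the underlined-letter encoding $w \withul j$: in the blind hierarchy the subroutines receive $w$ as input, whereas here they receive $w \withul j$ in order to expose the caller's frozen pebble position.

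For the base case $k=1$, $\Pebble_1$ consists of expressions $\pebble(f^\circ,(g_i))$ with $f^\circ$ regular (with origins) and each $g_i \in \Pebble_0$ bounded-output. Unpacking the combinator, this coincides with the two-way transducers: the values $g_i(w\withul j)$ can be absorbed into the finite control of a 2DFT simulating $f^\circ$, and conversely any regular $h$ is realised as $\pebble(h^\circ,(c_a)_{a \in \Sigma})$ with $c_a$ the constant function outputting $a$.

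For the inductive step $\Pebble_{k+1} \subseteq$ ($k+1$)-pebble transducers, take $F = \pebble(f^\circ,(g_i)_{i \in I})$; by induction each $g_i$ is computed by some $k$-pebble transducer $\cT_i$. I would build a $(k+1)$-pebble transducer for $F$ whose bottom pebble simulates a 2DFT for $f^\circ$: whenever the 2DFT would emit an origin-annotated letter $(i,j)$ at current head position $j$, the machine pushes a second pebble at $j$ and delegates control to $\cT_i$, with this second pebble serving as $\cT_i$'s own bottom pebble (so that the stack tops out at height $k+1$). Reading the underlined letter $\underline{w[j]}$ in $\cT_i$'s input is encoded by the native pebble-comparison feature: $\cT_i$'s topmost pebble is deemed to point at the underlined letter exactly when it coincides with the frozen second pebble. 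The converse inclusion runs symmetrically: given a $(k+1)$-pebble transducer $\cT$, I would decompose its run into phases delimited by push/pop events of the second pebble. The trajectory of the bottom pebble, annotated with the current call-type at each push, is a regular function with origins $f^\circ \colon \Gamma^* \to (I \times \bbN)^*$ where $I$ enumerates the finitely many call-types. Each subroutine, run while the caller's pebble is frozen at some position $j$, is a $k$-pebble transducer on input $w \withul j$ computing some $g_i \in \Pebble_k$ by induction; thus $\cT$ computes $\pebble(f^\circ,(g_i)) \in \Pebble_{k+1}$.

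The main delicate point, and essentially the only substantive difference with the blind case, is to justify that the single underlined letter in $w \withul j$ captures \emph{all} interactions between a subroutine and its caller: the stack discipline forbids moving or inspecting any strictly lower pebble, and the only position information the subroutine can extract from its immediately-enclosing caller is the boolean test \enquote{is my current top pebble at the caller's frozen position?}. Once this is checked, the remaining bookkeeping—threading 2DFT state across subroutine calls, encoding finitely many return-states into the output alphabet $I$ so that $f^\circ$ can be resumed after each call, and serialising concatenated outputs—transfers from~\cite[Appendix~D]{NguyenNP21} essentially verbatim.
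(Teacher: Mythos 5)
Your approach matches the paper's: the paper gives no proof at all, merely pointing to the detailed argument for $\Blind_k$ in~\cite[Appendix~D]{NguyenNP21} and declaring this proposition a \enquote{straightforward adaptation} of it — and your proposal spells out exactly that adaptation (induction on $k$, decomposing runs at push/pop events of the second pebble, using $w\withul j$ to expose the frozen head). Two small slips in the write-up are worth noting, though they do not affect the construction: in your \enquote{delicate point} paragraph, the stack discipline forbids \emph{moving} strictly lower pebbles but does \emph{not} forbid comparing positions with them (if it did, the underlining would be pointless and $\Pebble_k$ would collapse to $\Blind_k$ — the very feature you are encoding); and in the forward simulation, \enquote{coincides with the frozen second pebble} should read \enquote{frozen first pebble}, since the second pebble is $\cT_i$'s \emph{active} bottom head while it is the first pebble that sits frozen at position $j$.
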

Note that a similar definition of the \enquote{last pebble} transducers
mentioned in the introduction is given in~\cite[Definition~3.3]{LastPebble},
without explicit proof that they coincide with the expected machine model.

Finally, let us recall that unlike general pebble transducers,
blind pebble transducers enjoy the pebble minimization
property~\cite[Theorem~7.1]{NguyenNP21}. See also~\cite{LastPebble} for an
\emph{effective} minimization proof\footnote{It is likely that the proof
  of~\cite{NguyenNP21} could be made effective, but this is not explicit.}
(i.e.\ that provides an algorithm to find, given a blind $\ell$-pebble
transducer, its degree $k$ of growth and an equivalent blind $k$-pebble
transducer) and a generalization to \enquote{last pebble} transducers.

\begin{theorem}%[{\cite{NguyenNP21,LastPebble}}]
  \label{thm:blind-min}
  For all $k\in\bbN$, $\Blind_k = \left\{f \in \displaystyle\bigcup_{\ell\in\bbN} \Blind_\ell \;\middle|\; |f(w)| = O(|w|^k)\right\}$.
\end{theorem}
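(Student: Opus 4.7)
The forward inclusion follows by induction on $k$. The base case $\Blind_0 \subseteq \{O(1)\}$ is immediate since $\Blind_0$ consists of finite-range functions. For $k \geq 1$, writing $f = \blind(g,(h_i)_{i \in I})$ with $g$ regular and $h_i \in \Blind_{k-1}$, one gets
\[ |f(w)| \;\leq\; |g(w)| \cdot \max_{i \in I} |h_i(w)| \;=\; O(|w|) \cdot O(|w|^{k-1}) \;=\; O(|w|^k), \]
using the linear growth of regular functions together with the inductive hypothesis.

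For the converse, the plan is to proceed by induction on $k$: show that any $f \in \bigcup_\ell \Blind_\ell$ with $|f(w)| = O(|w|^k)$ lies in $\Blind_k$. The base case $k=0$ amounts to the fact that bounded output length forces finite range. For the inductive step $k \geq 1$, assume $f \in \Blind_\ell$ and do a secondary induction on $\ell$. If $\ell \leq k$, we use the trivial padding $\Blind_\ell \subseteq \Blind_k$ (prepend a constant regular outer function). If $\ell \geq k+2$, decompose $f = \blind(g,(h_i)_{i \in I})$ with $h_i \in \Blind_{\ell-1}$; since each single call $h_i(w)$ is a factor of $f(w)$, we get $|h_i(w)| = O(|w|^k)$ whenever $i$ is produced by $g$ on some input, so the secondary inductive hypothesis yields $h_i \in \Blind_k$ and hence $f \in \Blind_{k+1}$. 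A further application of the secondary inductive hypothesis at $\ell' = k+1 < \ell$ then concludes.

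The real difficulty is the residual case $\ell = k+1$: show that $f = \blind(g,(h_i))$ with $g$ regular, $h_i \in \Blind_k$ and $|f(w)| = O(|w|^k)$ already lies in $\Blind_k$. Here I would split the outer letters $i \in I$ according to whether their occurrence count $c_i(w) := |g(w)|_i$ stays bounded in $w$ or can be made arbitrarily large. A pumping argument on the regular function $g$ should yield, for each \emph{common} letter, witnesses satisfying $c_i(w) = \Theta(|w|)$, so the global bound $c_i(w) \cdot |h_i(w)| \leq |f(w)| = O(|w|^k)$ forces $|h_i(w)| = O(|w|^{k-1})$; the primary inductive hypothesis on $k$ then gives $h_i \in \Blind_{k-1}$, as needed to exhibit $f$ in $\Blind_k$ form. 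The contributions of the \emph{rare} letters (bounded occurrence count) sum to $O(|w|^k)$ in total and should be absorbed by rewriting $g$ so that their outputs are emitted directly in the regular layer. The main obstacle will be making this rare/common dichotomy effectively realizable by a regular function, which presumably rests on a precise pumping-based growth analysis of regular functions, in the spirit of the growth classifications underlying the other pebble-minimization results cited in the introduction.
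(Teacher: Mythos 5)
Note first that the paper does not itself prove Theorem~\ref{thm:blind-min}: it is recalled from~\cite[Theorem~7.1]{NguyenNP21} (with an effective version in~\cite{LastPebble}), so there is no in-paper argument to compare yours against. Evaluating your sketch on its own: the forward inclusion is correct, and the reduction of the case $\ell\geq k+2$ to the residual case $\ell=k+1$ is essentially sound, modulo a small repair you should make explicit: the bound $|h_i(w)|\leq|f(w)|=O(|w|^k)$ holds only on those inputs $w$ for which $i$ actually occurs in $g(w)$, so before applying the secondary induction hypothesis you need to replace $h_i$ by its restriction to the regular language $\{w : |g(w)|_i>0\}$ (sending other inputs to $\varepsilon$); this restricted function is still in $\Blind_{\ell-1}$ and $f$ is unchanged.

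The genuine gap is, as you suspect, the case $\ell=k+1$, and the per-letter rare/common dichotomy you propose is too coarse to close it. Whether the multiplicity $c_i(w)=|g(w)|_i$ is large or small is input-dependent, so no fixed partition of the output alphabet $I$ does the job: a letter can be ``common'' in the sense that $\sup_w c_i(w)=\infty$ and yet have $c_i(w)=1$ on inputs of unbounded length. Concretely, take $\Gamma=\{a,b\}$ and $g(w)=c^{|w|_a}$; then $c_c(a^n)=n$ grows linearly, so $c$ is common, but $c_c(ab^n)=1$ while $|ab^n|=n+1$, so the bound $c_c(w)\cdot|h_c(w)|\leq|f(w)|$ gives only $|h_c(ab^n)|=O(n^k)$, not $O(n^{k-1})$. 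Hence ``$|h_i(w)|=O(|w|^{k-1})$ for all $w$'' does not follow and the primary induction hypothesis cannot be applied to $h_c$ as stated. Symmetrically, ``emitting the rare letters' outputs directly in the regular layer'' cannot work as phrased: a rare letter's $h_i$ lies in $\Blind_k$ and may have growth $\Theta(|w|^k)$, far beyond what the linear-growth regular outer layer can produce; and because rare and common occurrences are interleaved in $g(w)$, one cannot simply concatenate the two contributions either. Closing this case requires an analysis of how, on each individual input, the budget $|f(w)|$ is split among the calls to the $h_i$, and a corresponding input-dependent factorization/refinement of $g$ --- which is essentially where the real work of~\cite{NguyenNP21} and~\cite{LastPebble} lies --- rather than a uniform split of $I$.
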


\begin{corollary}
  \label{cor:innsq-not-polyblind}
  The inner squaring function (Example~\ref{exa:innsq}) is not polyblind.
  \end{corollary}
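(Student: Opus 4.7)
The plan is to combine \Cref{thm:blind-min} with \Cref{thm:counterexample}, bridged by the inclusion $\Blind_k \subseteq \Pebble_k$. Concretely, I would argue by contradiction: suppose $\innsq \in \Blind_\ell$ for some $\ell \in \bbN$. Since $|\innsq(w)| = O(|w|^2)$ by direct inspection of \Cref{exa:innsq}, \Cref{thm:blind-min} then forces $\innsq \in \Blind_2$. It thus suffices to establish the inclusion $\Blind_2 \subseteq \Pebble_2$, because \Cref{thm:counterexample} asserts that $\innsq \notin \Pebble_2$, yielding the desired contradiction.

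The inclusion $\Blind_k \subseteq \Pebble_k$ can be seen through the intended machine-model reading (a blind $k$-pebble transducer is a $k$-pebble transducer that happens never to compare pebble positions), or, at the level of the combinator definitions, by a short induction on $k$. The base case $\Blind_0 = \Pebble_0$ is definitional. For the step, given $\blind(f,(h_i)_{i\in I})$ with $f\colon \Gamma^* \to I^*$ regular and $h_i \in \Blind_k$, fix any origin-preserving realization $f^\circ$ of $f$ and set $g_i(w \withul j) = h_i(w)$ (i.e., $g_i$ erases the underline and then applies $h_i$). One then has $\blind(f,(h_i)) = \pebble(f^\circ, (g_i))$, which sits in $\Pebble_{k+1}$ provided each $g_i$ lies in $\Pebble_k$.

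The only step where any care is needed is checking that $g_i \in \Pebble_k$, i.e.\ closure of $\Pebble_k$ under precomposition with the letter-to-letter morphism that forgets the underline. This is a standard property of pebble transducers and can be established by a straightforward induction using the combinator formulation, so it poses no real obstacle; the substance of the argument is entirely concentrated in the two external inputs \Cref{thm:blind-min} and \Cref{thm:counterexample}.
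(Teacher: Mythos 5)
Your proposal matches the paper's own proof: both argue that if $\innsq$ were polyblind, then by \Cref{thm:blind-min} and quadratic growth it would lie in $\Blind_2$, hence in $\Pebble_2$, contradicting \Cref{thm:counterexample}. The only difference is that you spell out the inclusion $\Blind_k \subseteq \Pebble_k$ (correctly) whereas the paper takes it as evident.
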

  \begin{proof}
    Since it has quadratic growth, if it were polyblind, it would be in $\Blind_2\subset\Pebble_2$. This would contradict the main result of this section (Theorem~\ref{thm:counterexample}).
\end{proof}

\subsection{Proof of Theorem~\ref{thm:counterexample}}
\label{sec:innsq-proof}

Our approach to prove $\innsq\notin\Pebble_2$ goes through the output languages of regular functions.
\begin{definition}
Call a language $L \subseteq \Sigma^*$ a \emph{regular image} if there exists a regular function $f$ with codomain $\Sigma^*$ such that $L = \Image(f)$.
\end{definition}
Not all regular images are regular or even context-free languages: consider e.g.\ $\{ a^n b^n c^n d^n \mid n \in \mathbb{N}\}$.

We will show that no function in $\Pebble_2$ can coincide with $\innsq$ on the subset of inputs
\[ (a^*b\#)^* \#^* = \{\text{strings with the shape}\ a\dots ab\# \dots \# a\dots ab \#\# \dots \#\} \]
For the sake of contradiction, assume the opposite.
\begin{claim}\label{clm:reg-image-right-shape}
  Under this assumption, there exists a language $L \subseteq b\{a,b\}^*b$
  \begin{itemize}
    \item which is a regular image
    \item that contains only factors of words from $\innsq((a^*b\#)^*\#^*)$
    \item and such that, for any $N\in\bbN$, some word in $L$ has the shape
      $\overbrace{b\underbrace{a \dots a}_{\mathclap{\text{every block of $a$s
              has the same length}\ n\geq N}} b \dots b\underbrace{a \dots a}b}^{\mathclap{\text{there are at least $N$ $b$s}}}$.
  \end{itemize}
\end{claim}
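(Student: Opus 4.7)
The plan is to use the decomposition $f = \pebble(f^\circ, (g_i)_{i\in I})$ (with $f^\circ$ regular with origins and each $g_i\in\Pebble_1$ regular) given by the definition of $\Pebble_2$, to locate via pigeonhole a single subroutine call $g_i(w\withul j)$ whose output contains a factor of the desired form, and to write $L$ as the image of an ad hoc regular function built from the $g_i$, $f^\circ$ and a regular post-processor, relying on closure of regular functions under composition.

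\smallskip
The key input family is $w_N := (a^Nb\#)^N\#^{N^2} \in (a^*b\#)^*\#^*$, for which $|w_N| = 2N^2 + 2N$ and
\[\innsq(w_N) \;=\; [(a^Nb)^{N+N^2}\,\#]^{N-1}\,(a^Nb)^{N+N^2}\,\#^{N^2+1},\]
so the ``non-trailing'' prefix of $\innsq(w_N)$ (everything before the final $\#^{N^2+1}$) has length $\Theta(N^4)$, while $|f^\circ(w_N)| = O(|w_N|) = O(N^2)$ by linear growth of regular functions. Averaging, some call $t_N$ has output $v_N := g_{i_{t_N}}(w_N \withul j_{t_N})$ whose intersection with the non-trailing prefix has length $\Omega(N^2)$. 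Since two consecutive non-trailing $\#$'s are separated by a block of length $(N+N^2)(N+1) = \Theta(N^3) \gg N^2$, this intersection contains at most one $\#$, so its longer $\#$-free half (of length $\Omega(N^2)$) lies inside a single block $(a^Nb)^{N+N^2}$; trimming partial leading and trailing $a$-runs yields a factor of the form $b(a^Nb)^{k_N}$ with $k_N = \Omega(N)$.

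\smallskip
Now let $h$ be the regular function taking as input some encoding of a triple $(i, w, j)$ with $i \in I$, $w \in \{a, b, \#\}^*$, $j \in \{1, \ldots, |w|\}$: $h$ first checks regularly that $w \in (a^*b\#)^*\#^*$ and that $(i, j)$ appears in $f^\circ(w)$, then computes $g_i(w \withul j)$ and applies a regular post-processor $r$ extracting the longest factor of $g_i(w \withul j)$ of the form $b(a^n b)^k$ with $k \geq 1$; if any check fails or no such factor exists, $h$ outputs the fallback $bab$, which is itself a factor of $\innsq(ab\#ab\#ab\#) = ababab\#ababab\#ababab\#$. By closure under composition $h$ is regular, so $L := \Image(h)$ is a regular image. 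By construction $L \subseteq b\{a,b\}^* b$ and each element of $L$ is a factor of a word in $\innsq((a^*b\#)^*\#^*)$. For the last condition, given $N \in \bbN$ we pick $M$ from the previous paragraph large enough that $M \geq N$ and $k_M + 1 \geq N$: then $b(a^M b)^{k_M} \in L$ has $a$-blocks all of length $M \geq N$ and at least $N$ $b$'s.

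\smallskip
The main subtlety is implementing $r$ as a regular function: a 2DFT must find factors $b(a^n b)^k$ (which requires checking equality of $a$-run lengths via back-and-forth scans) and pick a canonical one---for example, the one maximizing total length, or the one starting at the first $b$ of the longest $\#$-free substring of the input. This is routine but tedious; one must also be careful to apply pigeonhole on the intersection of each call's output with the non-trailing prefix of $\innsq(w_N)$ (rather than on its total length), to avoid counterproductive calls lying mostly inside the long trailing $\#^{N^2+1}$ suffix.
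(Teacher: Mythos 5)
Your overall plan closely mirrors the paper's proof: unfold the $\Pebble_2$ decomposition, apply a pigeonhole/averaging argument on inputs $(a^N b\#)^N \#^{N^2}$ (the paper uses two parameters $n,m$ in $(a^n b\#)^n\#^{nm}$, but this is a cosmetic difference), observe that a single subroutine output is essentially confined to one $\#$-block and contributes $\Omega(N^2)$ output length there, and package everything as the image of a regular function built by composing a ``checker'' (that the input lies in $(a^*b\#)^*\#^*$ and that $(i,j)$ occurs in $f^\circ(w)$) with the $g_i$ and a post-processor.

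The one genuine gap is the post-processor $r$. You define $r$ as ``extract the longest factor of the form $b(a^nb)^k$ (with $k\ge 1$)'' and describe its implementation as routine zigzagging, but this $r$ is \emph{not} a regular function. Picking the \emph{longest} among several candidate $\#$-free factors requires comparing lengths of disjoint blocks, which two-way transducers cannot do. Concretely, already $a^m b a^n \mapsto a^{\max(m,n)}$ is not regular: for a 2DFT with $Q$ states, a standard pumping argument on the $a$-blocks forces the output length on $a^m b a^n$ (for $m,n$ large in a fixed residue class mod $|Q|!$) to be of the form $\alpha m + \beta n + \gamma$, which $\max(m,n)$ is not. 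Likewise, on inputs such as $ba^m b a^m b \,\#\, b a^n b a^n b a^n b$ your $r$ would have to output the longer of the two sides, which switches discontinuously across the line $2m+3 = 3n+4$; this is the same obstruction. The alternative tie-break you suggest (``starting at the first $b$ of the longest $\#$-free substring'') also needs a max over block lengths and runs into the same problem.

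The paper avoids this entirely, and you can patch your argument the same way. Since each relevant subroutine output contains at most one $\#$, you do not need the \emph{longest} $\#$-free factor: either the $\#$-free \emph{prefix} or the $\#$-free \emph{suffix} already carries $\Omega(N^2)$ of the output. Both ``longest $\#$-free prefix'' (copy until the first $\#$) and ``longest $\#$-free suffix'' are obviously regular, and so is the trimming map $a\cdots a\,b\,w\,b\,a\cdots a \mapsto bwb$. Taking $L$ to be the union over $i\in I$ of the images of (trim $\circ$ prefix $\circ$ $g_i$ $\circ$ checker) and (trim $\circ$ suffix $\circ$ $g_i$ $\circ$ checker) gives a regular image (finite unions of regular images are regular images), and the fact that every element of $L$ has the shape $b(a^nb)^k$ comes for free from its being a $\#$-free factor of some $\innsq(w)$ that starts and ends with $b$ --- no shape-checking by the transducer is needed. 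With this substitution, the rest of your argument goes through; everything else is essentially the paper's proof with slightly different bookkeeping.
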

\begin{proof}
  We start by an analysis of some output factors arising in the computation of $\innsq$ by a hypothetical 2-pebble transducer, which will later inform the
  construction of a suitable language $L$. Unfolding our formal definition of $\Pebble_2$, our assumption means that $\innsq$ coincides on $(a^*b\#)^*\#^*$
  with some function $\pebble(g^\circ,(h_i)_{i\in I})$ where
  \begin{itemize}
    \item $g^\circ \colon \{a,b,\#\}^{*} \to (\{a,b,\#\}\times\bbN)^{*}$ is a regular function with origins;
    \item for $i\in I$, the function $h_{i} \colon \{a,b,\#,\underline{a},\underline{b},\underline\#\}^{*} \to \{a,b,\#\}^*$ is in $\Pebble_1$, so it is regular.
  \end{itemize}
  The functions $g^\circ$ and $h_i$, being regular, have linear growth: there is
  some constant $C \in \mathbb{N}$ such that, for all input words $u,v$ and $i
  \in I$, we have $|g^\circ(u)| \le C|u|$ and $|h_i(v)| \le C|v|$.

  Consider the input word $u = (a^n b\#)^n \#^{nm}$ for some $n,m\in\bbN$ (which shall be taken large enough to satisfy constraints that will arise during the proof). By definition,
  \[ \innsq(u) = v_1 \dots v_k \quad
  \text{where} \quad g^\circ(w) = (i_{1}, j_1) \dots (i_{k},j_{k})
  \quad\text{and}\quad v_\ell = h_{i_{\ell}}(u\withul j_{\ell}) \]
    For large enough $n$ and $m$, we have that for all $\ell \in \{1,\dots,k\}$,
\[ |v_\ell| \leq C|u\withul j_\ell| = Cn(n+2+m) < (n+1)n(1+m) =
\underbrace{|a^n b| \times |u|_\#}_{\mathclap{\text{distance between two consecutive non-adjacent occurrences of $\#$ in $\innsq(u)$}\qquad\qquad\qquad\qquad}} \]
Thanks to the above, if we call $p$ the largest integer such that $|v_1 \dots v_p|_\# < n$, then:
\begin{itemize}
  \item $|v_\ell|_\# \leq 1$ for every $\ell \in \{1,\dots,p\}$;
  \item $v_1 \dots v_p$ is a prefix of $\innsq(u)$ that contains all the $n-1$ first occurrences of $\#$.
\end{itemize}
As a consequence of the latter item,
$((a^n b)^{n(1+m)})\#)^{n-1}$ is a prefix of $v_1 \dots v_p$ and 
  $|v_1 \dots v_p|_b \ge n(1+m)(n-1)$. Meanwhile, we also have $p < k = |g^\circ(u)| \leq C|u| = Cn(n + 2 + m)$. Hence
\[ \frac{|v_1 \dots v_p|_b}{p} \geq \frac{n(1+m)(n-1)}{Cn(n+2+m)} \xrightarrow[~n,m \to +\infty~]{} +\infty \]
Thus, by the pigeonhole principle, for an input $u$ parameterized by large
enough values of $n$ and $m$, there is at least one output factor $v_\ell$
containing $2N$ or more occurrences of $b$. Since $|v_\ell|_\# \leq 1$, either
its largest $\#$-free prefix or its largest $\#$-free suffix (or both) contains
$N$ or more occurrences of $b$. After applying the regular function $a\dots abwba \dots a \mapsto bwb$ (for $w \in \{a,b\}^*$) to trim this prefix or suffix, we get a factor of $v_\ell$ -- and therefore of $\innsq(u)$ -- in $b(a^n b)^*$, and we may take $n \geq N$ to ensure that the blocks of $a$s have length at least $N$.

To sum up, for every $N$, there are some $n,r \geq N$ such that
$b(a^n b)^r$ belongs to the following language $L$,
where $p_i$ is the identity on $\{u\withul j \mid u\in (a^*b\#)^*\#^*,\,
(i,j)\ \text{appears in}\ g^\circ(u)\}$ and sends every other word of
$\{a,b,\#,\underline{a},\underline{b},\underline\#\}^{*}$ to
$\varepsilon$:
\[ L = \bigcup_{i\in I} \Image(\text{trim} \circ \text{largest \#-free
    prefix} \circ h_i \circ p_i) \cup \Image(\text{trim} \circ \text{largest
    \#-free suffix} \circ h_i \circ p_i) \]
The \enquote{projection} $p_i$ is here to make sure that any $v \in \Image(h_i
\circ p_i)$ is a factor of some word from $\innsq((a^*b\#)^*\#^*)$ --
which, in turn, guarantees that $L$ itself contains only such factors.

To complete the proof, we need to show that $L$ is a regular image. One can
check that regular images are closed under finite unions, so it is enough that
$\Image(-)$ is applied to regular functions in the above expression. That is
indeed the case: regular functions are closed under
composition\footnote{This is a quite non-trivial result, whose use could be avoided
in our proof at the price of more explicit manipulations of two-way
transducers.}~\cite{ChytilJ77}, and $p_i$ is regular because it is computed by a
two-way transducer that first checks in a single pass that its input is in
$(a^*b\#)^*\#^*$ when the underlining is ignored, then simulates the transducer
for $g^\circ$ -- without producing output -- until it observes that some $i$ is
indeed outputted with origin at the underlined position, and finally, if the
previous checks have not failed, copies its input.
\end{proof}

We shall now use the following pumping lemma due to Rozoy~{\cite[\S4.1]{Rozoy}}:
\begin{lemma}
\label{lem:2dftoutpump}
  If $L$ is a regular image, then for some $k,K \in \bbN$, every $w \in L$ with $|w| \geq K$ has a
  decomposition $w = u_{0} v_{1} \dots u_{k-1} v_{k} u_{k}$ with
   \begin{itemize}
     \item $\exists i \in \{1,\dots,k\} \colon v_{i} \neq \varepsilon$
     \item $\forall i \in \{1,\dots,k\},\, |v_{i}| \leq K$
     \item $\{u_{0} (v_{1})^{n} \dots u_{k-1} (v_{k})^{n} u_{k} \mid n \in\bbN\} \subseteq L$
   \end{itemize}
\end{lemma}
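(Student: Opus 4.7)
The plan is to exploit the equivalence between regular functions and copyless streaming string transducers (SSTs) and derive the pumping shape from the dynamics of iterated register updates. Fix an SST $\mathcal{T}$ with register set $R$, state set $Q$, and a constant $C$ with $|f(x)| \leq C|x|$ for all $x$. Given $w \in L = \Image(f)$ with $|w| \geq K$ (the threshold $K$ to be chosen), any preimage $x$ of $w$ satisfies $|x| \geq |w|/C$, so $|x|$ can be made arbitrarily long by taking $K$ large.

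The first substantive step is to extract a long iterable loop in $x$. Since there are only finitely many (state, copyless register-update) profiles, a Ramsey-style pigeonhole on positions of $x$ produces, for $|x|$ large enough, indices $i_0 < i_1 < \dots < i_s$ such that all factors $\beta_j := x[i_j{+}1 \ldots i_{j+1}]$ coincide with a single word $\beta$ of length at most $|Q|$, inducing a single state-preserving copyless update $\tau$. This factorizes $x = \alpha \beta^s \gamma$, and the run of $\mathcal{T}$ on $\alpha \beta^m \gamma$ is well-defined for every $m \geq 0$, giving $f(\alpha \beta^m \gamma) \in L$. One can also ensure $\beta$ is output-nontrivial: output-trivial updates form a submonoid of all copyless updates, so if only trivial updates arose, $f$ would produce outputs of bounded length, contradicting $|w| \geq K$.

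The second, more delicate step is to show that the iterated update $\tau^m$ has a uniform pumping shape. The copyless discipline forces the register occurrences in $\tau$ to define a partial function $R \rightharpoonup R$, whose iteration on the register set exhibits a \enquote{preperiod then cycle} dynamics of length at most $|R|$. After a threshold $m_0 \leq |R|$, each register's content in $\tau^m(\rho_0)$ stabilizes into a strip $\sigma_0\, \omega_1^{m - m_0}\, \sigma_1 \cdots \omega_{\ell}^{m - m_0}\, \sigma_{\ell}$, where the $\omega_i$'s are the constant-string contributions picked up while traversing the cycles of the register-migration graph. Applying the final assembly dictated by $\gamma$ preserves this pumping shape, yielding constants $s_0, \dots, s_k, t_1, \dots, t_k$ (with $k$ and $|t_i|$ bounded in terms of $\mathcal{T}$ only) such that for every $m \geq m_0$,
\[
  f(\alpha \beta^m \gamma) \;=\; s_0\, t_1^{m - m_0}\, s_1\, \cdots\, t_k^{m - m_0}\, s_k.
\]

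Taking $m^\star = s \geq m_0 + 1$, one extracts the decomposition of $w = f(\alpha \beta^{m^\star} \gamma)$ by setting $u_0 := s_0 t_1^{m^\star - m_0 - 1}$, $u_i := s_i t_{i+1}^{m^\star - m_0 - 1}$ for $1 \leq i < k$, $u_k := s_k$, and $v_i := t_i$; then $w = u_0 v_1 u_1 \cdots v_k u_k$, each $|v_i|$ is bounded by a constant absorbed into $K$, at least one $v_i$ is non-empty by the output-nontriviality of $\beta$, and $u_0 v_1^n u_1 \cdots v_k^n u_k = f(\alpha \beta^{m^\star - 1 + n} \gamma) \in L$ for every $n \in \bbN$. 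The main obstacle is making the normal form for $\tau^m$ precise: one must separate the transient and cyclic parts of the register-migration partial function and prove by induction on its depth that each register's content assembles into a single strip rather than a more intricate polynomial-in-$m$ expression. This is essentially a copyless-specific path-decomposition argument that mirrors the analysis of transition monoids for pumping lemmas in classical automata theory.
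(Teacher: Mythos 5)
The paper does not prove this lemma; it is cited from Rozoy~\cite[\S4.1]{Rozoy}, whose argument works directly on two-way transducers. Your proposal takes an SST route, which is a reasonable alternative in principle, but it breaks at the \enquote{Ramsey-style pigeonhole} step. You claim that for $|x|$ large enough one can find indices $i_0 < \dots < i_s$ such that \emph{all} the factors $\beta_j = x[i_j{+}1 \ldots i_{j+1}]$ coincide with a single word $\beta$, yielding $x = \alpha\beta^s\gamma$. That is false: pigeonhole or Ramsey gives you several positions sharing the same state (and, at best, the same abstract register-update \emph{shape} between consecutive ones), but the intervening \emph{strings} are generally all distinct. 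For instance, a long prefix of the Thue--Morse word contains no factor of the form $\beta\beta\beta$, so no decomposition $\alpha\beta^s\gamma$ with $s\geq 3$ and $|\beta|\leq|Q|$ exists, while $|x|$ can be arbitrarily long. Without literal repetitions you have no $\beta^s$ to pump, so you cannot set $m^\star = s$, and with only a single loop ($m^\star=1$) your own transient bound $m_0$ prevents $n=0$ from landing in the stable regime.

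The standard repair, and probably what a full version of this argument must do, is to Ramsey over the finite monoid of copyless update \emph{shapes} (the partial map $R \rightharpoonup R$ recording where each register's content migrates, or $\perp$) together with states, and extract positions $i_0<i_1<i_2<i_3$ at the same state such that every segment $[i_a,i_b]$ induces one fixed \emph{idempotent} shape $e$. Calling the three intervening factors $\beta_1,\beta_2,\beta_3$ (different words, same shape), one pumps the middle one: $\alpha\beta_1\beta_2^{\,n}\beta_3\gamma$ is a valid input for all $n\geq 0$, recovering $x$ at $n=1$. Idempotency of $e$ makes $\tau_2^{\,n}$ enter the strip normal form after a single application (no separate period bookkeeping), and the outer factors do the job that your $m_0$ cannot: $\beta_3$ fixes, independently of $n$, the values of the dying registers (those with $e(r)=\perp$), and $\beta_1$ seeds the accumulating ones, so that the $n=0$ case $f(\alpha\beta_1\beta_3\gamma)$ genuinely matches $u_0 u_1 \cdots u_k$. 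Your remaining ingredients -- the strip decomposition of iterated copyless updates, and using linear output growth to guarantee an output-nontrivial loop -- are sound, but as written the proof does not close at this step.
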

Let us apply this lemma to the language $L$ given by Claim~\ref{clm:reg-image-right-shape}, yielding two constants $k,K\in\bbN$. One of the properties stated in Claim~\ref{clm:reg-image-right-shape} is that there exist $n \geq K$ and $r \geq 2k+1$ such that $b(a^nb)^r \in L$. This string has length greater than $K$, so it is pumpable:
\[ b(a^nb)^r = u_0 v_1 \dots u_{k-1} v_k u_k \qquad\text{and}\qquad w = u_0 (v_1)^2 \dots u_{k-1} (v_k)^2 u_k \in L \]
Let us now show that $ba^nb$ is a factor of $w$. Since $|v_i| \leq K \leq n$ for every $i \in \{1,\dots,k\}$, and any two occurrences of $b$ in $b(a^nb)^r$ are separated by $a^n$, each factor $v_i$ can contain at most one $b$. So $|v_1|_b + \dots + |v_k|_b \leq k$ which leads to $|u_0|_b + \dots + |u_k|_b \geq r+1-k \geq k+2$. One of the $u_j$ must then contain two occurrences of $b$, and this $u_j$ is also a factor of $w$.

Recall that $L$ is comprised exclusively of $\#$-free factors of words in $\innsq((a^*b\#)^*\#^*)$, so \emph{all the \enquote{blocks of $a$s} in $w$ must have the same size}. Having seen above that this size must be $n$,
we can now wrap up our proof by contradiction of
Theorem~\ref{thm:counterexample} with a case analysis:
\begin{itemize}
  \item First assume that $|v_i|_b \geq 1$ for some $i \in \{1,\dots,k\}$. We may write $v_i = a^\ell b \dots b a^m$ (where the first and last $b$ can coincide) in this case. Then $w$ contains $(v_i)^2$ which in turn contains a factor $ba^{\ell+m}b$. So $\ell+m=n$ since it is the size of a block of $a$s in $w$; but at the same time, using the second item of Lemma~\ref{lem:2dftoutpump}, $\ell+m < |v_i| \leq K \leq n$.
  \item Otherwise, $v_i \in \{a\}^*$ for all $i$, so pumping does not increase the number of $b$s. Therefore $w$ has as many blocks of $a$s as $b(a^n b)^r$, and we have also seen that its blocks have size $n$, so $w = b(a^n b)^r$. But since at least one $v_i$ is nonempty, $|w| > |b(a^n b)^r|$.
\end{itemize}

\begin{remark}\label{rem:iterative}
  Rozoy proves Lemma~\ref{lem:2dftoutpump} for deterministic two-way transducers (cf.~\Cref{sec:reg-origins}). She also shows~\cite[\S4.2]{Rozoy} that the output languages of \emph{nondeterministic} two-way transducers enjoy a weaker version of the lemma without the bound on the length of the $v_i$ (this bound is refuted by the example $\{(w\#)^n \mid w \in \Sigma^*,\; n \in \mathbb{N}\}$) (a result later rediscovered by Smith~\cite{Pumping2NFT}); in general, the languages that satisfy this weaker pumping lemma are called \emph{$k$-iterative} in the literature (see e.g.~\cite{Pumping2NFT,KanazawaKMSY14}). For more on regular images, see~\cite{EhrichY71,EngelfrietH91}; several references are also given in~\cite[p.~18]{GauwinHDR}.\footnote{Instead of Latteux's unobtainable
  technical report cited by~\cite{GauwinHDR}, see~\cite[Prop.~I.2]{Latteux79} \& compare with~\cite{Rajlich72}.}
\end{remark}

\section{First-order interpretations (in 2 dimensions)}
\label{sec:fo-interpretations}

In this section, which is entirely independent from the previous one, we
recall another way to specify string-to-string functions, namely first-order
(FO) interpretations. There are two main motivations:
\begin{itemize}
\item they are a convenient tool (but not strictly necessary, cf.~Remark~\ref{rem:fo-useless}) for the next section;
\item we wish to state and prove Theorem~\ref{thm:polyblind-refutation},
  refuting~\cite[Conjecture~10.1]{NguyenNP21}.
\end{itemize}
We assume basic familiarity with first-order logic.
A word $w\in\Sigma^*$ can be seen as a structure whose domain is the set of \emph{positions} $\{1,\dots,|w|\}$, over the relational vocabulary consisting of:
\begin{itemize}
  \item for each $c\in\Sigma$, a unary symbol $c$ where $c(i)$ is interpreted as true whenever $w[i]=c$,
  \item and a binary relation symbol $\leq$, interpreted as the total order on positions.
\end{itemize}
As an example, for $\Sigma=\{a,b\}$, let $F(x) = b(x) \lor \forall y.\; (a(y) \lor x \leq y)$. A string $w\in\Sigma^*$ satisfies $F(i)$ for a given $i\in\{1,\dots,w\}$ -- notation: $w \models F(i)$ -- when either $w[i]=b$, or the position $i$ contains an $a$ which occurs before (at the left of) all the $b$s in the word $w$. Thus, the formula $\forall x.\; F(x)$ evaluates to true exactly over the words in $a^*b^*$.

This model-theoretic perspective also leads to a way to specify string-to-string functions.
\begin{definition}
  \label{def:fo-interp}
  Let $k\geq1$ and $\Gamma,\Sigma$ be alphabets.
  A \emph{two-dimensional first-order interpretation} $\cI$ from $\Gamma^*$ to
  $\Sigma^*$ consists of several FO formulas over $\Gamma$: $\cI_c(x_1,x_2)$
  for each $c\in\Sigma$, and $\cI_\leq(x_1,x_2,y_1,y_2)$.

  For $u\in\Gamma^*$, let $O^\cI_u$ be the structure
  \begin{itemize}
  \item with domain $\{ \langle i_1,i_2 \rangle \in \{1,\dots,|u|\}^2 \mid \exists c\in\Sigma \colon u \models \cI_c(i_1,i_2) \}$;
    \item where $c(\langle i_1,i_2\rangle)$ iff $u \models \cI_c(i_1,i_2)$, and $\langle i_1,i_2\rangle \leq\langle j_1,j_2\rangle$ iff $u\models\cI_\leq(i_1,i_2,j_1,j_2)$.
  \end{itemize}
  Then $\cI$ defines the function
  $u \in \Gamma^* \mapsto \displaystyle\begin{cases}
                                         v & \text{if}\ O^\cI_u \cong \text{the structure corresponding to}\ v\\
                                         \varepsilon & \text{when\footnotemark{} there is no such}\ v \in \Sigma^*
                                       \end{cases}$\\
                                       \footnotetext{This is purely for
                                         convenience, to avoid having to consider
                                         partial functions. The language of input words $u$ for
                                         which such a $v$ exists is first-order definable, so a partial FO interpretation can always be completed to a total one.}

\end{definition}
\begin{example}\label{exa:interp0}
  $a^n \in\{a\}^* \mapsto (a^{n-1}b)^{n-1}$ is defined by a two-dimensional FO interpretation $\cI$ such that $\cI_\leq$ is the lexicographic order over pairs and
  \[ \cI_a(x_1,x_2) = \lnot\underbrace{\max(x_1)}_{\mathclap{\text{i.e.}\ \forall y.\; y \leq x_1}} \land \lnot\max(x_2) \qquad \cI_b(x_1,x_2) = \lnot\max(x_1) \land \max(x_2) \]
\end{example}
\begin{example}\label{exa:innsq-interp}
  As a more subtle example, the inner squaring function (Example~\ref{exa:innsq})
  admits a two-dimensional FO interpretation that we intuitively illustrate over
  the input $aba\#baa\# bb$ as follows:
%  \[\setlength\arraycolsep{0.5mm}
%  \def\arraystretch{0.8}
%    \begin{array}{r|cccccccccc}
%      & a & b & a&\#&b&a&a&\#&b&b\\
%      \hline
%      a \\
%      b \\
%      a \\
%      \# & a&b&a&&b&a&a&&b&b\\
%      b \\
%      a \\
%      a \\
%      \# & a&b&a&&b&a&a&&b&b\\
%      b \\
%      b & &&&\#&&&&\#\\
%    \end{array}
%    \qquad\rightsquigarrow\qquad abaaba\#baabaa\#bbbb \]
  \begin{center}
  \includegraphics{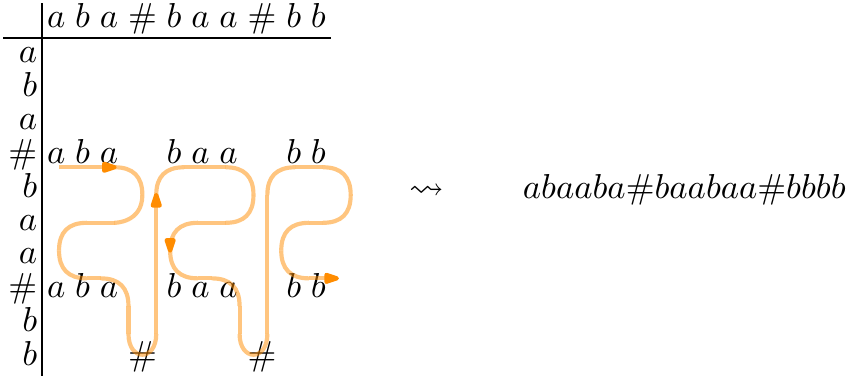}
  \end{center}
  The array on the left handside indicates the output letter associated
  with each pair of positions in the input word (if they are in the domain)
  and the path represents the order in which these coordinates should be read
  to form the inputs. The interpretation is defined by the following formulas:
   \begin{itemize}
      \item $\cI_{a}(x_{1},x_{2}) = a(x_{1}) \land \#(x_{2})$ and $\cI_{b}(x_{1},x_{2}) = b(x_{1}) \land \#(x_{2})$
      \item $\cI_{\#}(x_{1},x_{2}) = \#(x_1) \land \max(x_2)$
      \item $\cI_{\leq}(x_{1},x_{2},y_{1},y_{2}) =$ either $\#(y_1) \land x_1 \leq y_1$, or there exist $x_3,y_3$ such that
      \begin{itemize}
        \item $x_3 \leq x_1$, and there are no $\#$s strictly in-between $x_3$ and $x_1$;
        \item $y_3 \leq y_1$, and there are no $\#$s strictly in-between $y_3$ and $y_1$;
        \item neither $x_3$ nor $y_3$ has an immediate predecessor which is an $a$ or a $b$;
        \item $(x_3,x_2,x_1) \leq (y_3,y_2,y_1)$ for the lexicographic order on 3-tuples.
      \end{itemize}
      Note the connection with the 3-pebble transducer of \Cref{sec:innsq}: $x_3,x_2,x_1$ correspond to the positions of the respective pebbles $\downarrow,\Downarrow,\triangledown$.
  \end{itemize}
\end{example}

\begin{theorem}[{Bojańczyk, Kiefer \& Lhote~\cite{msoInterpretations}}]\label{thm:interp-polyreg}
  Every function specified by a first-order interpretation is polyregular, i.e.\
  in $\displaystyle\bigcup_{k\in\bbN}\Pebble_k$.
\end{theorem}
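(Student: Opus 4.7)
The plan is to recognize the statement as a special case of the Bojańczyk--Kiefer--Lhote correspondence~\cite{msoInterpretations} between polyregular functions and polynomial-output MSO string-to-string interpretations. Since every FO formula is in particular an MSO formula, a 2-dim FO interpretation $\cI$ is \emph{a fortiori} a 2-dim MSO interpretation; and its output length is bounded by $|u|^2$, which is polynomial in the input length. Applying the cited correspondence thus yields $\cI \in \bigcup_k \Pebble_k$ directly. This black-box approach is short but offloads all the work to a non-trivial external theorem.

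For a more self-contained argument, I would construct a pebble transducer directly. Two pebbles on the stack mark the current output pair $(i_1, i_2)$, and the transducer iterates through all pairs in the $\cI_\leq$-order, outputting the letter $c$ such that $\cI_c(i_1, i_2)$ holds. Between consecutive output letters, the transducer computes the $\cI_\leq$-successor of the current pair, i.e., the minimum pair strictly greater than $(i_1, i_2)$ under $\cI_\leq$. This successor predicate is expressible as an FO formula with four free variables; evaluating such a formula can be done by a pebble-style two-way device using additional pebbles as witnesses for the quantified variables, so the total number of pebbles is bounded by the quantifier depth of the formulas appearing in $\cI$, giving a function in some $\Pebble_k$. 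The initial position of the iteration (the $\cI_\leq$-minimum) and the termination condition (no successor exists) are FO-definable in the same way.

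The main obstacle in the direct approach is respecting the stack discipline: only the topmost pebble is freely movable, so updating the \enquote{first} coordinate $i_1$ requires popping the pebble tracking $i_2$, which loses information about its position. The workaround is to recompute any needed information via subsequent scans rather than attempting to preserve it across pops — a routine if somewhat tedious bookkeeping step, paralleling the closure arguments for the combinator-based definition of $\Pebble_k$ sketched in~\cite{NguyenNP21} for the blind-pebble case. Either approach sidesteps the complication that $\cI_\leq$ may be an arbitrary FO-definable total order on pairs, and need not coincide with the lexicographic order that pebble stacks would naturally induce.
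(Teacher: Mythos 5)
Your black-box argument matches the paper exactly: Theorem~\ref{thm:interp-polyreg} is stated with a citation and no proof, treated as an immediate special case (FO $\subset$ MSO, dimension~$2$) of the MSO-interpretations-imply-polyregular direction of~\cite{msoInterpretations}. The direct pebble-transducer construction you sketch is extra — and, as you yourself acknowledge at the end, it does not actually resolve the central difficulty that $\cI_\leq$ is an arbitrary FO-definable order rather than the lexicographic order a pebble stack naturally traverses — but this is immaterial since the paper likewise offers no proof beyond the citation.
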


This holds for interpretations of arbitrary dimension $k\in\bbN$, even though we
only defined the $k=2$ case; and also for MSO interpretations, where first-order
logic is replaced by monadic second-order logic. A converse is also shown
in~\cite{msoInterpretations}: every polyregular function coincides, on inputs of
length at least 2, with an MSO interpretation.
\begin{remark}\label{rem:mso-interp}
  One can drop the condition \enquote{of length at least 2} using the
  \enquote{multi-component} variant of MSO
  interpretations~\cite[\S5.1]{PolyregSurvey}. These also have the more
  significant advantage that a polyregular function of growth $O(n^k)$ can be
  expressed as a $k$-dimensional interpretation~\cite[\S2]{PolyregGrowth}: the
  natural counterpart to pebble minimization works for multi-component MSO
  interpretations.
\end{remark}

Similarly to the characterization of polyregular functions by MSO
interpretations, it would be desirable to have a logical formalism for polyblind
functions (those that are in $\Blind_k$ for some $k$). Unfortunately -- and this
is a new result:
\begin{theorem}
  \label{thm:polyblind-refutation}
  The characterization of polyblind functions conjectured in~\cite[\S10]{NguyenNP21} is wrong.
\end{theorem}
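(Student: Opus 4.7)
The plan is to refute the conjecture by exhibiting an explicit witness: a polyregular function that the conjectured logical formalism classifies as polyblind, but which by an independent argument is provably not polyblind. The natural candidate is the inner squaring function $\innsq$ from Example~\ref{exa:innsq}. On the one hand, Corollary~\ref{cor:innsq-not-polyblind} already tells us that $\innsq$ is not polyblind; on the other hand, Example~\ref{exa:innsq-interp} gives a very structured two-dimensional FO interpretation for $\innsq$ — the output order is the lexicographic order on the triple $(x_3, x_2, x_1)$ of input positions, and the formulas for $\cI_a, \cI_b, \cI_\#$ are essentially conjunctions of atomic tests. This is exactly the kind of shape one expects of a \emph{blind} specification, since no nontrivial comparison between the \enquote{pebble positions} $x_1, x_2, x_3$ is used beyond the lexicographic ordering itself.

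First, I would unpack the precise statement of~\cite[Conjecture~10.1]{NguyenNP21}. Presumably the conjecture describes a syntactic restriction of (MSO or FO) interpretations intended to capture polyblindness — most naturally, a restriction forbidding order comparisons between position tuples other than lexicographically, and perhaps restricting the letter and domain formulas in a similar fashion. I would translate the conjecture into a concrete closure condition that the interpretation $\cI$ of Example~\ref{exa:innsq-interp} can be checked against.

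Second, and this is the technical heart of the argument, I would verify that $\cI$ (or an equivalent interpretation, possibly multi-component in the sense of Remark~\ref{rem:mso-interp}) lies in the conjectured sub-formalism. The letter formulas $\cI_c(x_1, x_2)$ are immediate: they only test the letter at each component. For $\cI_\leq$, the key observation is that the auxiliary witnesses $x_3, y_3$ in Example~\ref{exa:innsq-interp} are first-order definable from $x_1$ and $y_1$ respectively (they are the leftmost position of the block to which $x_1$ belongs, definable by a quantifier-free-up-to-relativization formula), so the order reduces to a lexicographic comparison of tuples of positions obtained by applying simple FO-definable projections. If the conjecture's syntactic format insists on even more restrictive building blocks, I would either rewrite $\cI$ accordingly or fall back on a mild variant of $\innsq$ (e.g.\ restricted to inputs in $(a^*b\#)^*\#^*$, which was already the witness used in \S\ref{sec:innsq-proof}) for which the fit is more transparent and for which non-polyblindness still follows from Theorem~\ref{thm:counterexample} together with Theorem~\ref{thm:blind-min}.

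Finally, the two ingredients clash: the conjectured characterization would put $\innsq$ in the polyblind class, while Corollary~\ref{cor:innsq-not-polyblind} shows $\innsq \notin \bigcup_k \Blind_k$. This contradiction refutes the conjecture. The main obstacle I anticipate is step two: the bookkeeping needed to show that $\cI$ really fits the conjecture's syntactic constraints verbatim. If the precise conjectured syntax turns out to be slightly tighter than what Example~\ref{exa:innsq-interp} exhibits, the rescue strategy is to exploit the closure properties of polyblind functions to pre- or post-compose $\innsq$ with a trivially polyblind regular function, producing a semantically equivalent counterexample in normal form, while non-polyblindness is preserved because $\bigcup_k \Blind_k$ is closed under such compositions.
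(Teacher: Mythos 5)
Your proposal is correct and takes essentially the same route as the paper: exhibit $\innsq$ as the witness via the interpretation of Example~\ref{exa:innsq-interp}, check that this interpretation falls within the conjectured sub-formalism, and contradict Corollary~\ref{cor:innsq-not-polyblind}. The only minor divergence is in how the fit is verified -- the paper gives a purely syntactic criterion (each variable gets a sort in $\{1,2\}$ with no cross-sort comparisons, and $x_3,y_3$ get sort $1$) rather than your semantic observation about $x_3,y_3$ being FO-definable from $x_1,y_1$, but both are valid readings of Example~\ref{exa:innsq-interp} and the overall argument is the same.
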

\begin{proof}
For purely syntactic reasons, the two-dimensional FO interpretations such that:
\begin{itemize}
  \item in every formula defining the interpretation, each variable can be given a sort in $\{1,2\}$ in such a way that variables of different sorts are never compared (for equality or ordering)
  \item and in $\cI(x_1,x_2,y_1,y_2)$ and $\cI_c(x_1,x_2)$ ($c\in\Sigma$), $x_i$ and $y_i$ have sort $i\in\{1,2\}$
\end{itemize}
 can be seen as a special case of the logical interpretations considered in~\cite[Conjecture~10.1]{NguyenNP21}. Example~\ref{exa:innsq-interp} fits these criteria (with $x_3$ and $y_3$ having sort 1). If the conjecture were true, $\innsq$ would therefore be polyblind, contradicting Corollary~\ref{cor:innsq-not-polyblind}.
\end{proof}

\section{Quadratic polyregular functions vs macro tree transducers}
\label{sec:engelfrieteries}

Our goal now is to show the following:
\begin{theorem}\label{thm:engelfrieteries}
  For any $k\geq1$, there exists a string-to-string function $f_k$ such that:
  \begin{itemize}
    \item $f_k$ is computed by a two-dimensional first-order interpretation (see
      the previous section);
    \item for any $k$-tuple of functions $(g_1, \dots, g_k)$ such that each $g_i$ is computed by some \emph{macro tree transducer}, $\Image(g_1 \circ \dots \circ g_k) \neq \Image(f_k)$.
  \end{itemize}
\end{theorem}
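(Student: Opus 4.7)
The plan is to transpose to our setting the argument of Engelfriet and Maneth~\cite{PebbleString}, which separates levels of the pebble hierarchy. Our statement is stronger in two respects: it is cast at the level of output languages rather than of functions, and it separates from $k$-fold MTT compositions rather than from $k$-fold compositions of top-down tree transducers. The key external ingredient is the bridge theorem from~\cite{OutputMacro}, which provides a structural, pumping-like property of the output languages of $k$-fold MTT compositions with parameters (a nesting depth) controlled by $k$.

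First, I would define each $f_k$ explicitly as a two-dimensional FO interpretation. Any such function is polyregular by Theorem~\ref{thm:interp-polyreg}, and it automatically has quadratic growth because its output positions are a subset of pairs of input positions. Taking inspiration from Example~\ref{exa:innsq-interp}, the idea is to arrange the enumeration of output letters along pairs of input positions so that $\Image(f_k)$ contains a parametric family of witness words whose combinatorial complexity scales with $k$ --- for instance by embedding, via suitable marker letters, $k+1$ independent counters that can be read off the input through the interpretation's formulas.

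Second, I would confront $\Image(f_k)$ with the bridge theorem. That theorem, applied to any $k$-fold MTT composition $g_1\circ\dots\circ g_k$, implies that every word in $\Image(g_1\circ\dots\circ g_k)$ satisfies a pumping/nesting invariant at depth $\leq k$. The witness words in $\Image(f_k)$ would be chosen so as to force nesting depth at least $k+1$: every way of pumping them must tie together $k+1$ parameters, so none of the decompositions allowed by the bridge theorem at level $k$ can accommodate the whole family. This directly rules out $\Image(g_1\circ\dots\circ g_k) = \Image(f_k)$ for any such tuple.

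The main obstacle is the two-sided design of $f_k$: on one side it must be simple enough to be realized by a 2D FO interpretation (hence of quadratic growth), while on the other side $\Image(f_k)$ must place its witnesses strictly above the $k$-fold MTT level in the hierarchy measured by the bridge theorem. Engelfriet and Maneth's original witnesses from~\cite{PebbleString} are tailored to tree transducer hierarchies, so the delicate step is to adapt both their family of hard words and the matching extraction of a numerical invariant from~\cite{OutputMacro} to the string-to-string MTT composition setting, taking care that the invariant we use is a \emph{necessary} condition on level-$k$ output languages that our witnesses cleanly violate.
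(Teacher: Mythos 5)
Your proposal correctly identifies the bridge theorem of Engelfriet and Maneth~\cite{OutputMacro} as the key external tool and correctly aims to adapt the output-language separation of~\cite{PebbleString}, but it misunderstands what the bridge theorem actually says, and this misunderstanding leaves a hole at the very center of the argument. The bridge theorem is \emph{not} a pumping lemma: it does not assert that words of a language in $\Image(\SO(\MTT^{k}))$ satisfy a ``pumping/nesting invariant at depth $\leq k$'', and there is no family of ``decompositions allowed at level $k$'' that a single cleverly chosen family of witness words could ``violate''. It is a \emph{transfer} (descent) statement: if $L' \in \Image(\SO(\MTT^{k+1}))$ and $L'$ is \emph{d-complete} for $L$, then $L \in \Image(\SO(\MTT^{k}))$. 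Applied once, it tells you nothing about a single language $\Image(f_k)$ in isolation; to exploit it you must exhibit a \emph{chain} of languages, each d-complete for the previous one, and a base case that fails to be in the lowest level. Your plan has no chain, no notion of d-completeness, and no base case, so it would stall exactly at the point where you try to ``confront $\Image(f_k)$ with the bridge theorem''.

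Concretely, the missing pieces are: (i) the definition of ($\delta$-)d-completeness from Definition~\ref{def:d-complete}; (ii) an explicit ``lifting'' construction $\Psi$ on two-dimensional FO interpretations together with a proof of Lemma~\ref{lem:engelfrieteries}, namely that $\Image(\Psi(\cI))$ is d-complete for $\Image(\cI)$ (the paper achieves this by padding the input with blocks of a fresh letter $\clubsuit$ and echoing the lengths of those blocks into the output using two fresh letters $\square$ and $\lozenge$, keyed to the two ``origin'' coordinates of each output position); (iii) the iterated definition $\cI_k = \Psi^{k-1}(\cI_1)$, which is what supplies the chain; and (iv) a separate \emph{base-case} separation $\Image(\cI_1) \notin \Image(\SO(\MTT^{1}))$, which in the paper comes from a genuinely different, older result (\cite[Theorem~3.16]{Engelfriet82} about languages of the form $\{(a^n b)^{f(n)}\}$), applied to $\cI_1 : a^n \mapsto (a^{n-1}b)^{n-1}$. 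The induction step is then just the contrapositive of the bridge theorem applied to the d-complete pair $(\Image(\cI_k),\Image(\cI_{k+1}))$. Your intuition that the $f_k$ should ``tie together $k+1$ parameters'' does loosely describe what the iterated $\Psi$ achieves, but without identifying d-completeness, the lifting construction, and the external base-case theorem, the plan cannot be carried through.
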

In the second item, the domain of $g_i$ must be equal to the codomain of $g_{i+1}$ for the composition to be well-defined; and to make the comparison of output languages meaningful, the codomains of $g_1$ and $f_k$ should be equal.
But $g_1$ outputs ranked trees, whereas $f_1$ outputs strings. To make sense of this, as usual, we identify $\Sigma^*$ with the set of trees over the ranked alphabet that consists of a letter $\widehat{c}$ of rank 1 for each $c\in\Sigma$, plus a single letter $\widehat{\varepsilon}$ of rank $0$. Through this identification, the domain of $g_k$ and $f_k$ may also be equal; in that case we may conclude that $g_1 \circ \dots \circ g_k \neq f_k$.

The relevance of Theorem~\ref{thm:engelfrieteries} to the question of pebble minimization comes from:
\begin{theorem}[Engelfriet \& Maneth~{\cite[item (2) of the abstract]{EngelfrietPebbleMacro}}]
  \label{thm:pebble-to-macro}
  Any tree-to-tree function computed by some $k$-pebble\footnote{As explained in \Cref{ftn:off-by-one} of the introduction, the indexing convention for the pebble transducer hierarchy in~\cite{EngelfrietPebbleMacro} is off by one compared to ours.} tree transducer can also be expressed as a $k$-fold composition of macro tree transducers. (But the converse is false.)
\end{theorem}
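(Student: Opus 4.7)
I would proceed by induction on $k$.

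Base case ($k=1$): A $1$-pebble tree transducer is essentially a tree-walking tree transducer with outputs --- its single pebble serves as the reading head that moves up, down, and sideways on the input tree. It is a classical result (due to Engelfriet and Vogler) that such tree-walking transducers embed into macro tree transducers, the idea being that the MTT uses its parameters to record, for each subtree and each possible entry/exit state pair (a finite set), the corresponding output contribution. This settles the base case with a single MTT.

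Inductive step: Assume the claim for $k$, and let $T$ be a $(k+1)$-pebble tree transducer. The plan is to write $T = M \circ T'$, where $M$ is an MTT and $T'$ is a $k$-pebble tree transducer; the inductive hypothesis then expresses $T'$ as a $k$-fold composition of MTTs, and hence $T$ as a $(k+1)$-fold one. The decomposition \emph{peels off the innermost pebble} from the stack: imagine running $T$ on an input tree $t$, and note that the bottom pebble performs an outer tree-walk, while its successors on the stack may push further pebbles, engage in sub-computations using at most $k$ additional pebbles, and eventually pop back to the bottom level. The idea is to let $M$ carry out the bottom pebble's walk on $t$ and emit an annotated intermediate tree $M(t)$ whose nodes record, at each moment the bottom pebble is visited, a marker for its current position together with enough state information to resume after a sub-computation. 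The $k$-pebble transducer $T'$ then scans $M(t)$ with its own bottom pebble and simulates $T$'s sub-computations using its $k$ pebbles. The parameter-passing ability of MTTs is exactly what allows $M$ to suspend and later resume the bottom pebble's output contribution when a sub-computation completes, correctly reproducing the interleaving of outputs across levels of nested pebble pushes.

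The main obstacle is the careful bookkeeping required to implement this decomposition: in particular, (i)~the MTT $M$ must compute, at each position, the \emph{exit state} that a sub-computation with one more pebble would eventually produce, so that the outer walk can resume in the correct state --- this is regular look-ahead information that MTTs can absorb by a product construction with a suitable bottom-up tree automaton; and (ii)~the tree $M(t)$ must be laid out so that the natural walk performed by $T'$ visits the annotated entry points in exactly the order dictated by $T$, which forces $M$ to interleave markers and context carefully. Formalizing this requires a precise machine model of pebble tree transducers (including the stack discipline on pebbles) and some technical lemmas about MTTs with regular look-ahead, which is why the Engelfriet--Maneth argument, though morally straightforward, is nontrivial to write out in full.
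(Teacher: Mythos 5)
The paper does not give a proof of Theorem~\ref{thm:pebble-to-macro}; it imports the result as a black box from Engelfriet and Maneth~\cite{EngelfrietPebbleMacro} (citing item~(2) of that paper's abstract) and uses it purely as a dependency. There is therefore no in-paper argument against which to compare your attempt.

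Taken on its own as a reconstruction of the cited result, your sketch has the right overall shape: induction on $k$, a base case translating a tree-walking transducer (a $1$-pebble transducer in the paper's indexing) into a single MTT, and an inductive step that factors out one pebble into an MTT composed with a lower-pebble transducer. But it remains a sketch rather than a proof. The genuinely hard part, which you flag yourself as item~(ii), is arranging the annotated intermediate tree so that the residual $k$-pebble transducer, performing its natural walk, reproduces the nested suspend-and-resume order of subcomputations of the original $(k+1)$-pebble machine; that bookkeeping is precisely what occupies the bulk of the Engelfriet--Maneth argument, and your description does not supply it. There are also two smaller slips worth fixing: you write $T = M \circ T'$ but then describe applying $M$ to the input first and $T'$ to its output, so the intended factorization is $T = T' \circ M$; and you announce peeling off the \emph{innermost} pebble but then describe removing the bottom pebble of the stack, which is the outer control loop, i.e.\ the opposite end. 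Since the paper deliberately cites this theorem without reproving it, the appropriate move is simply to defer to~\cite{EngelfrietPebbleMacro} rather than attempting to reconstruct the proof.
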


A $k$-pebble string transducer is none other than a $k$-pebble tree transducer working on the encodings of strings as unary trees described above. Thus it follows directly (without having to recall the definition of $\Pebble_k$ from \S\ref{sec:pebble-def}) that $f_k\notin\Pebble_k$, and even that $\Image(f_k)\neq\Image(h)$ for any $h\in\Pebble_k$.
On the other hand, any two-dimensional FO interpretation is polyregular
(Theorem~\ref{thm:interp-polyreg}) with quadratic growth (an immediate
consequence of the definition is that the output length is at most the square of
the input length). So the $f_k$ are indeed counterexamples to pebble
minimization.

\begin{remark}\label{rem:fo-useless}
  We have defined the $f_k$ by first-order interpretations because they provide
  a convenient notion of \enquote{two-dimensional origin semantics}, used in our
  inductive construction. But it would have been possible to show by a simple
  ad-hoc argument that each $f_k$ is in some $\Pebble_\ell$ (as defined in
  \Cref{sec:pebble-def}). Therefore, we do not depend in an essential way on the
  difficult translation from FO interpretations to pebble transducers
  (Theorem~\ref{thm:interp-polyreg}) to refute pebble minimization.
\end{remark}

 We recall the required properties of macro tree transducers in
 \Cref{sec:comp-mtt}, then we prove the main Theorem~\ref{thm:engelfrieteries} in \Cref{sec:engelfrieteries-proof}.

\subsection{Compositions of macro tree transducers (MTTs)}
\label{sec:comp-mtt}

We do not formally define MTTs here, but only recall useful facts for our purposes. In this paper, we only consider \emph{total deterministic} MTTs. Let $\MTT^k$ be the class of tree-to-tree functions computed by some composition of $k\geq1$ macro tree transducers.
\begin{remark}
  $\MTT^k$ can be characterized equivalently as the class of functions computed
  by \mbox{$k$-iterated} pushdown transducers~\cite{EngelfrietPushdownMacro}, or by
  \enquote{level-$k$} tree transducers~\cite{EngelfrietHighLevel}. For any $k$,
  the functions in $\MTT^k$ with linear growth are precisely the regular tree
  functions~\cite{EngelfrietIM21}\footnote{The paper~\cite{EngelfrietIM21} talks
    about compositions of tree-walking transducers (TWT), but $\MTT^1$ is
    included in the class of functions obtained by composing 3
    TWTs~\cite[Lemma~37]{EngelfrietPebbleMacro}.} -- this, combined with
  Theorem~\ref{thm:pebble-to-macro}, proves pebble minimization for
  polyregular functions of linear growth,\footnote{This special case is also a
    consequence of dimension minimization for MSO interpretations (Remark~\ref{rem:mso-interp}), thanks to the equivalence between two-way (i.e.\ 1-pebble) transducers and MSO transductions~\cite{EngelfrietHoogeboom}.} showing that our counterexamples with quadratic growth are in some sense minimal.
\end{remark}

For a class of tree-to-tree functions $\cC$, we also write $\SO(\cC)$ for the subclass of functions that output strings (via the identification described at the beginning of \S\ref{sec:engelfrieteries}). The literature on tree transducers often refers to the classes $\SO(\MTT^k)$ by equivalent descriptions involving a \emph{yield} operation that maps trees to strings:
$\yield(t)$ is the word formed by listing the labels of the leaves of $t$ in infix order.
\begin{claim}\label{clm:clarification-mtt}
  $\SO(\MTT^1) = \{\yield\circ f \mid f\ \text{is computed by some \emph{top-down} tree transducer}\}$, and for any $k\in\bbN$, we also have
  $\SO(\MTT^{k+2}) = \{\yield\circ f \mid f \in \MTT^{k+1}\}$.
\end{claim}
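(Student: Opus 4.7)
The claim collects classical results about MTTs, so I would prove it by combining (a) a direct construction turning a top-down tree transducer into a yield-computing MTT, (b) a yield-normalization of string-valued MTTs going back to~\cite{Macro}, and (c) closure of MTTs under composition with top-down tree transducers.

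For the first equality, the direction $\supseteq$ is handled by a direct construction: given a top-down transducer $T$ with state set $Q$, I would build an MTT whose states $\{q' : q \in Q\}$ all have rank~$2$, with the second argument serving as an accumulator for the yield suffix to be emitted after processing the current subtree. Each rule $q(\sigma(x_1,\ldots,x_n)) \to \xi$ of $T$ translates into an MTT rule whose right-hand side is obtained by a recursive translation of $\xi$: a rank-$0$ leaf $\alpha$ becomes $\widehat\alpha(\cdot)$; a rank-$m$ internal symbol $\beta(\xi_1, \ldots, \xi_m)$ becomes the translation of $\xi_1$ with the translation of $\xi_2$ (and so on, down to $\xi_m$ and finally the original accumulator) threaded through as its own accumulator; and a state call $p(x_j)$ becomes $p'(x_j, \cdot)$. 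The direction $\subseteq$ is the yield-normalization from~\cite{Macro}: a string-valued MTT can always be written as $\yield \circ T$ for some top-down tree transducer, because under $\yield$ the MTT's parameter-passing collapses into plain string concatenation and a top-down transducer can produce the resulting unary tree in-line without needing parameters.

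For the second equality, $\supseteq$ is immediate: $\yield$ itself is an MTT (an easy rank-$2$ accumulator construction performing infix traversal), so $\yield \circ f \in \MTT^1 \circ \MTT^{k+1} = \MTT^{k+2}$. For $\subseteq$, given $g = M_1 \circ M_2 \circ \cdots \circ M_{k+2} \in \SO(\MTT^{k+2})$, I would apply the first equality to the outermost MTT $M_1$ (which is string-valued on the image of the rest of the composition) to obtain $M_1 = \yield \circ T$ for a top-down transducer $T$, giving $g = \yield \circ T \circ M_2 \circ \cdots \circ M_{k+2}$. Then, invoking the classical closure result that $T \circ M \in \MTT$ for any top-down $T$ and any MTT $M$ --- proved by a state-product construction in which a state of the composite MTT tracks $M$'s state together with $T$'s state, and the parameter positions are replicated across the states of $T$ --- I would absorb $T$ into $M_2$, so that $f := (T \circ M_2) \circ M_3 \circ \cdots \circ M_{k+2} \in \MTT^{k+1}$ satisfies $g = \yield \circ f$.

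The main obstacle is the $\subseteq$ direction of the first equality, whose yield-normalization is delicate because MTT parameters enable copying and nesting of computed subtrees, and it takes work to show that under $\yield$ these can be simulated by parameter-free top-down behavior. A secondary technicality concerns the $\subseteq$ direction of the second equality: strictly speaking, the first equality applies when $M_1$ is globally string-valued, whereas here $M_1$ is only known to be string-valued on the image of the preceding composition. I would resolve this either by noting that the normalization proof adapts to this relative setting, or by replacing $M_1$ with a modification (using a safe default on out-of-range inputs) which is globally string-valued and agrees with $M_1$ on the inputs reached from the composition.
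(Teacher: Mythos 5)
Your proposal is correct and follows essentially the same route as the paper: for the second equality, both you and the paper factor the outermost monadic MTT as $\yield\circ T$ via the first equality and then absorb the top-down transducer $T$ into the next MTT using the closure $T\circ\MTT\subseteq\MTT$ (which the paper sources to \cite[Lemma~5]{OutputMacro}). For the first equality the paper simply quotes Maneth's remark that monadic MTTs coincide with top-down tree-to-string transducers, whereas you sketch the accumulator construction for one direction and cite \cite{Macro} for the other; you also rightly flag the technicality, glossed over in the paper, that the outermost $M_1$ is a priori only string-valued on the reachable image, and your suggested fix works.
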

\begin{proof}
  The claim about $\SO(\MTT^1)$ is well-known. For instance it is stated by Maneth~\cite[end of~\S5]{Maneth15} as follows:
  \enquote{Note that macro tree transducers with monadic output alphabet are essentially the same as top-down tree-to-string transducers} -- where, as is the tradition in tree transducer papers, \enquote{top-down tree-to-string} means $\yield\circ(\text{top-down tree-to-tree})$, \emph{not} $\SO(\text{top-down tree-to-tree})$!

  We can then deduce the second claim:
  \begin{align*}
    \SO(\MTT^{k+2}) &= \SO(\MTT^1)\circ\MTT^1\circ\MTT^{k}\\
    &= \yield \circ \underbrace{(\text{top-down tree transducers}) \circ \MTT}_{=\,\MTT,\ \text{cf.~\cite[Lemma~5]{OutputMacro}}} \circ\; \MTT^{k}
  \end{align*}
  (beware: in~\cite{OutputMacro}, the notation $\circ$ is flipped compared to ours).
\end{proof}

This allows us to rephrase a key \enquote{bridge theorem} by Engelfriet and Maneth~\cite{OutputMacro} in a form that suits us better. For a class of string-valued functions $\cC'$, let $\Image(\cC') = \{\Image(f) \mid f \in \cC'\}$.
\begin{theorem}[{\cite[Theorem~18]{OutputMacro} + Claim~\ref{clm:clarification-mtt}}]
  \label{thm:bridge}
  Let $k \geq 1$ and $L,L'$ be string languages. If $L' \in
  \Image(\SO(\MTT^{k+1}))$ and $L'$ is \emph{d-complete} (see below) for $L$, then $L\in\Image(\SO(\MTT^k))$.
\end{theorem}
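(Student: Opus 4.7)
The plan is to derive this as a nearly immediate corollary of Engelfriet and Maneth's Theorem~18 in~\cite{OutputMacro}, with Claim~\ref{clm:clarification-mtt} playing the role of a translation layer between the $\SO(\MTT^k)$ notation natural for our purposes and the ``$\yield$ composed with a tree-to-tree transducer'' presentation standard in the MTT literature (where string-valued devices are traditionally obtained by postcomposing a tree-to-tree MTT with the $\yield$ map).

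First I would unpack the hypothesis. From $L' \in \Image(\SO(\MTT^{k+1}))$ we get $L' = \Image(h)$ for some $h \in \SO(\MTT^{k+1})$. For every $k \geq 1$, Claim~\ref{clm:clarification-mtt} yields $\SO(\MTT^{k+1}) = \{\yield \circ f \mid f \in \MTT^k\}$, so $h = \yield \circ f$ for some $f \in \MTT^k$ and therefore $L' = \Image(\yield \circ f)$~--- exactly the form to which~\cite[Theorem~18]{OutputMacro} applies. Feeding that theorem this presentation of $L'$ together with the d-completeness hypothesis then produces an expression $L = \Image(\yield \circ g)$ where $g$ lies one MTT-level lower: $g \in \MTT^{k-1}$ when $k \geq 2$, and $g$ a top-down tree-to-tree transducer when $k = 1$. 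A final invocation of Claim~\ref{clm:clarification-mtt} in the reverse direction (its first equation when $k = 1$, its second otherwise) rewrites this as $L \in \Image(\SO(\MTT^k))$, which is the desired conclusion.

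The only real work here is bookkeeping. One must check that the hierarchy conventions of~\cite{OutputMacro} line up with ours (in particular that the notion of composition of MTTs used there is the one we mean), that the degenerate case $k = 1$ is genuinely absorbed by the ``top-down'' clause of Claim~\ref{clm:clarification-mtt}, and that the notion of d-completeness spelled out just below Theorem~\ref{thm:bridge} matches the one used in~\cite{OutputMacro}. No further mathematical content enters the argument; the substantive content has already been done in~\cite[Theorem~18]{OutputMacro}.
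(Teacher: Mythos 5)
Your proof is correct and matches the paper's own reasoning --- the paper gives no separate proof, treating the theorem as the immediate combination of~\cite[Theorem~18]{OutputMacro} with Claim~\ref{clm:clarification-mtt}, exactly as you describe. One small precision on your last bookkeeping item: \cite[Theorem~18]{OutputMacro} is stated for the weaker notion of \emph{$\delta$-completeness}, so what you should check is not that d-completeness ``matches'' but that it \emph{implies} $\delta$-completeness (which it does by Definition~\ref{def:d-complete}, d-completeness being $\delta$-completeness plus an extra erasing condition that Theorem~18 does not need, though the later argument in \Cref{sec:engelfrieteries-proof} does).
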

\begin{definition}\label{def:d-complete}
  Let $L \subseteq \Sigma^*$ and $L' \subseteq (\Sigma \cup \Delta)^*$ with $\Sigma \cap \Delta = \varnothing$. We say that $L'$ is:
    \begin{itemize}
    \item \emph{$\delta$-complete for $L$}~\cite[\S5]{OutputMacro} when for every $u \in L$ there exist $w_0, \dots, w_n \in \Delta^*$ such that
    \begin{itemize}
      \item all the $w_i$ for $i\in\{1,\dots,n-1\}$ are \emph{pairwise distinct} words;
      \item $n = |u|$ and $w_0 u[1] w_1 \ldots u[n] w_n \in L'$;
    \end{itemize}
    \item \emph{d-complete for $L$}~\cite[\S4]{PebbleString} when it is $\delta$-complete for $L$ and \enquote{conversely}, by erasing the letters from $\Delta$ in the words in $L'$ one gets exactly $L$,
    i.e.\ $\varphi(L') = L$ where $\varphi$ is the monoid morphism mapping each letter in $\Sigma$ to itself and $\Delta$ to $\varepsilon$.
  \end{itemize}
\end{definition}

\subsection{Proof of Theorem~\ref{thm:engelfrieteries}}
\label{sec:engelfrieteries-proof}

Our strategy is a minor adaptation of Engelfriet and Maneth's proof~\cite[\S4]{PebbleString} that the hierarchy of output languages of $k$-pebble string transducers is strict.
(See also~\cite[Theorem~41]{EngelfrietPebbleMacro} for a similar result on tree languages.)
Writing $\Image(\cI) = \Image(f)$ when the first-order interpretation $\cI$ defines the string function $f$, we show that:
\begin{lemma}\label{lem:engelfrieteries}
  From any two-dimensional first-order interpretation $\cI$, one can build another 2D FO interpretation $\Psi(\cI)$ such that $\Image(\Psi(\cI))$ is d-complete for $\Image(\cI)$.
\end{lemma}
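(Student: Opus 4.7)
The plan is to construct $\Psi(\cI)$ so that its image contains, for each $\cI$-output $v = \cI(u)$, an interleaving $v[1]\,m_1\,v[2]\cdots m_{|v|-1}\,v[|v|]$ with pairwise distinct markers. Following the template of Engelfriet and Maneth, I would make the $k$-th marker encode the 2D position $P_k=(i_1,i_2)$ of the $k$-th $\cI$-output letter: over a fresh marker alphabet $\Delta=\{\#,\alpha\}$, set $m_k=\#\alpha^{i_1}\#\alpha^{i_2}$. Distinct $P_k$'s yield distinct $m_k$'s, securing $\delta$-completeness; and because markers live entirely in $\Delta$ while $\cI$-letters live in $\Sigma$, erasing $\Delta$ from $\Image(\Psi(\cI))$ recovers exactly $\Image(\cI)$, giving the other half of d-completeness.

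Since the total marker length can reach $\Theta(|u|\cdot|\cI(u)|) = \Theta(|u|^3)$, exceeding the $O(|u|^2)$ output capacity of a 2D interpretation on input $u$, I would enlarge $\Psi(\cI)$'s input alphabet to $\Gamma\cup\{\triangleleft\}$ and have $\Psi(\cI)$ operate on padded inputs $u\cdot\triangleleft^m$ with $m$ free. D-completeness only requires that \emph{some} sufficiently large $m$ realize the desired interleaving for each $u$; with $m=\Omega(|u|^{3/2})$ the 2D capacity $(|u|+m)^2$ suffices. The interpretation would split its output positions into two types: (a) the original $\cI$-positions $(x_1,x_2)$ with $x_1,x_2$ in the $u$-region, labeled by relativising $\cI_c$ to $\Gamma$-positions; and (b) marker positions living in the $\triangleleft$-padding region, each assigned by an FO-definable rule to some $\cI$-position $P$ and labeled by $\#$ or $\alpha$ so as to spell out $m(P)$. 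The new ordering formula $\cI_\leq^{\Psi(\cI)}$ would interleave each $P$'s marker block between the letter at $P$ and the next $\cI$-letter in $\cI_\leq$.

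The main obstacle is writing the block assignment and the new ordering purely in FO, since FO cannot count the gap index $k$: the block for $P_k$ must be identified through $P_k=(i_1,i_2)$ itself, not through its rank. I would handle this by placing the block for $P=(i_1,i_2)$ in a rectangle of padding pairs whose shape is FO-computable from $(i_1,i_2)$, so that each position inside it is both FO-identifiable and FO-labelable: comparing one coordinate with $i_1$ decides whether the letter is $\alpha$ from the first unary segment, a separating $\#$, or belongs to the $i_2$-segment. The verification of d-completeness then reduces to checking that the interpretation indeed outputs $v[1]\,m_1\cdots v[|v|]$ on inputs $u\triangleleft^m$ with $m$ large, and that on any input it outputs only interleavings whose $\Sigma$-projection lies in $\Image(\cI)$.
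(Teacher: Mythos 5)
Your high-level plan — encode the two-dimensional origin of each output letter as a marker word and enlarge the input alphabet with padding to make room for those markers — matches the paper's strategy, and your check that the total budget fits within the quadratic capacity of a 2D interpretation is sound. The gap is precisely at the point you flag as ``the main obstacle'' and then claim to resolve: with a single trailing padding block $u\cdot\triangleleft^{m}$, there is no first-order way to carve out, for each origin $(i_1,i_2)$, a block of $\Theta(i_1+i_2)$ padding pairs. To label a padding pair as the $r$-th $\alpha$ of the $i_1$-segment you would need a formula equating a rank inside the $\triangleleft$-block with a rank inside $u$, i.e.\ something of the shape $y\leq n+i_1$. That predicate is not FO-definable over $(\leq,\Gamma\cup\{\triangleleft\})$ (a routine Ehrenfeucht--Fraïssé argument: FO cannot synchronize the counter ``distance from $n$'' in the padding with the counter ``$i_1$'' in $u$). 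The ``rectangle FO-computable from $(i_1,i_2)$'' and ``comparing one coordinate with $i_1$'' hints do not escape this: a padding coordinate is always $>n\geq i_1$, so comparing it with $i_1$ carries no information, while using two $u$-coordinates would put the marker positions back on top of genuine $\cI$-output positions and would still leave only $n$ addressable keys for $n^2$ origins.

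The paper's construction dodges exactly this by interleaving the padding \emph{between} the input letters: the witness input is $\clubsuit^{p[0]}u_1\clubsuit^{p[1]}\cdots u_n\clubsuit^{p[n]}$, so the $i$-th $\clubsuit$-block is canonically addressed by the $u$-position that precedes it, via the FO formula $P(\widehat x,x)$ that names the nearest non-$\clubsuit$ position to the left. The marker for origin $(i_1,i_2)$ then simply occupies the pairs with one coordinate ranging over the block after $u_{i_1}$ (labelled $\square$) and those with one coordinate ranging over the block after $u_{i_2}$ (labelled $\lozenge$); no rank comparison is needed at all, because the marker lengths are just the block lengths $p[i_1],p[i_2]$. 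Taking $p[i]=i$ on the $\delta$-completeness witness makes distinct origins give distinct markers $\square^{i_1}\lozenge^{i_2}$. If you replace your trailing padding by this interleaved padding, the remainder of your argument — relativising the $\cI$-formulas to the non-padding positions, and ordering each marker block between its source letter and its successor — goes through.
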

Let us explain how this lemma leads to Theorem~\ref{thm:engelfrieteries}. For $k\geq1$, let
\[ \cI_k = \Psi^{k-1}(\text{the FO interpretation of
    Example~\ref{exa:interp0} that defines}\ a^n \mapsto (a^{n-1}b)^{n-1}) \]

We have\footnote{This implies $(a^n \mapsto (a^{n-1}b)^{n-1})\notin\MTT^1$, a
  fact that was later reproved in~\cite[Theorem~8.1(a)]{NguyenNP21}.}
$\Image(\cI_1)\notin\Image(\SO(\MTT^1))$ by immediate application
of~\cite[Theorem~3.16]{Engelfriet82}.\footnote{This old result of Engelfriet says that some superclass of $\Image(\SO(\MTT^1))$ does not contain any language of the form $\{(a^n b)^{f(n)} \mid n \in X\}$ where $X \subseteq \bbN\setminus\{0\}$ is infinite and $f \colon X \to \bbN\setminus\{0\}$ is injective.} By induction on $k$, we then get $\Image(\cI_k)\notin\Image(\SO(\MTT^k))$ for all $k$ -- which is precisely what we want -- using Lemma~\ref{lem:engelfrieteries} together with the contrapositive of Theorem~\ref{thm:bridge} for the induction step.

Our only remaining task is to prove Lemma~\ref{lem:engelfrieteries}.
\begin{proof}
  Let $\cI$ be a 2D FO interpretation from $\Gamma^*$ to $\Sigma^*$. 
  Choose $\clubsuit\notin\Gamma$ and $\square,\lozenge\notin\Sigma$ with $\square\neq\lozenge$.
  We define a new function $f'\colon (\Gamma\cup\{\clubsuit\})^* \to (\Sigma\cup\{\square,\lozenge\})^*$, which coincides with $f$ on $\Gamma^*$, in the following way. First, let us give an example: 
  for the interpretation of Example~\ref{exa:interp0} defining $a^n \mapsto (a^{n-1}b)^{n-1}$, the new function would map
  (the colors below serve to highlight blocks of matching sizes)
  \[ \clubsuit \dots \clubsuit a{\color{red}\clubsuit\clubsuit\clubsuit} a\clubsuit a{\color{blue}\clubsuit\clubsuit} \quad\text{to}\quad a{\color{red}\square\square\square}{\color{red}\lozenge\lozenge\lozenge} a{\color{red}\square\square\square}\lozenge b{\color{red}\square\square\square}{\color{blue}\lozenge\lozenge}a\square{\color{red}\lozenge\lozenge\lozenge} a\square\lozenge b\square{\color{blue}\lozenge\lozenge}. \]
  In general, for any input $u = u_1 \dots u_n \in \Gamma^*$, let $f(u) = v_1 \dots v_m$ (where the $u_i$ and $v_i$ are letters).
  Reusing the notation $O^\cI_u$ from Definition~\ref{def:fo-interp}, let
  $O^\cI_u = \{ \langle i_1, j_1\rangle \leq \dots \leq \langle i_m, j_m\rangle \}$ -- morally, $\langle i_p, j_p \rangle$ is the \enquote{origin} (as in \Cref{sec:reg-origins}) of the output letter $v_p$. Then, for any $p \in \bbN^{\{0,\dots,n\}}$, we take
  \[ f'\left(\clubsuit^{p[0]} u_1 \clubsuit^{p[1]} \dots u_n \clubsuit^{p[n]}\right) = v_1 \square^{p[i_1]} \lozenge^{p[j_1]} \dots v_m \square^{p[i_m]} \lozenge^{p[j_m]} \]
  -- note that the prefix in $\clubsuit^*$ is ignored.
  (Our use of the blocks of $\clubsuit$s is inspired by the coding of atom-oblivious functions in the Deatomization Theorem of~\cite{PolyregGrowth}, cf.\ Remark~\ref{rem:deatomization}.)

  We derive from $\cI$ a new 2D FO interpretation $\psi(\cI)$ that specifies $f'$.
  For example, we can illustrate $\Psi(\text{Example~\ref{exa:interp0}})$ by the figure below:

  \begin{center}
    \includegraphics{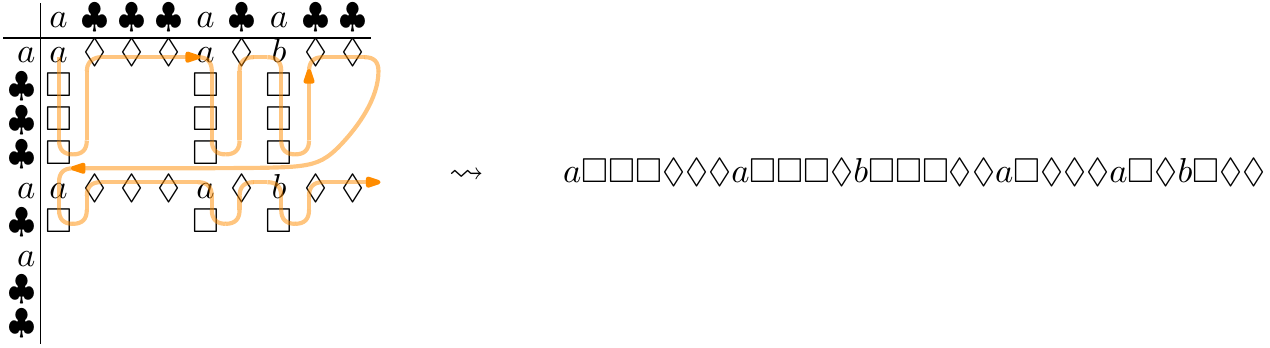}
  \end{center}

%  \[ \setlength\arraycolsep{0.5mm}
%  \def\arraystretch{0.8}
%  \begin{array}{r|ccccccccc}
%    & a&\clubsuit&\clubsuit&\clubsuit & a&\clubsuit & a&\clubsuit&\clubsuit \\\hline
%    a & a&\lozenge&\lozenge&\lozenge & a&\lozenge & b&\lozenge&\lozenge \\
%    \clubsuit & \square & & & & \square & & \square \\
%    \clubsuit & \square & & & & \square & & \square \\
%    \clubsuit & \square & & & & \square & & \square \\
%    a & a&\lozenge&\lozenge&\lozenge & a&\lozenge & b&\lozenge&\lozenge \\
%    \clubsuit & \square & & & & \square & & \square \\
%    a \\
%    \clubsuit\\
%    \clubsuit
%    \end{array}
%    \qquad
%          \rightsquigarrow\qquad a\square\square\square\lozenge\lozenge\lozenge a\square\square\square\lozenge b\square\square\square\lozenge\lozenge a\square\lozenge\lozenge\lozenge a\square\lozenge b\square\lozenge\lozenge
%     \]
  Now let us give the general recipe for $\Psi(\cI$).
  Given a formula $F$ over the relational signature for $\Gamma^*$, let $F^\relat$ be the relativized formula over $(\Gamma\cup\{\clubsuit\})$ where all
    quantifiers $\forall z.(\dots)$ and $\exists z.(\dots)$ are replaced by $\forall z.\; \lnot\clubsuit(z) \Rightarrow (\dots)$ and $\exists z.\;
    \lnot\clubsuit(z) \land (\dots)$ respectively. Let $P(x,y)$ be a formula stating that
    the restriction of the input to $[x,y]$ is in $\Gamma \clubsuit^*$:
  \[ P(x,y) = x \leq y \land \lnot\clubsuit(x) \land \forall z.\; (\lnot(z \leq x) \land z \leq y) \Rightarrow \clubsuit(z) \]
  We take the following definition for the interpretation $\Psi(\cI)$: 
  \begin{itemize}
    \item $\Psi(\cI)_c(x_1,x_2) = \lnot\clubsuit(x_1)\land\lnot\clubsuit(x_2)\land\cI_c^\relat(x_1,x_2)$ for $c\in\Sigma$
    \item $\Psi(\cI)_\square(x_1,x_2) = \clubsuit(x_1)\land\lnot\clubsuit(x_2)\land \exists \widehat{x_1}.\; P(\widehat{x_1},x_1) \land \displaystyle\bigvee_{c\in\Sigma} \cI_c^\relat(\widehat{x_1},x_2)$
    \item $\Psi(\cI)_\lozenge(x_1,x_2) = \lnot\clubsuit(x_1)\land\clubsuit(x_2)\land \exists \widehat{x_2}.\; P(\widehat{x_2},x_2) \land \displaystyle\bigvee_{c\in\Sigma} \cI_c^\relat(x_1,\widehat{x_2})$
    \item $\Psi(\cI)_\leq(x_1,x_2,y_1,y_2) = \exists \widehat{x}_1,\widehat{x}_2,\widehat{y}_1,\widehat{y}_2.\; \displaystyle\bigwedge_{i=1,2} P(\widehat{x}_i,x_i) \land P(\widehat{y}_i,y_i)$ and
    \begin{itemize}
    \item either $(\widehat{x}_1,\widehat{x}_2) \neq (\widehat{y}_1,\widehat{y}_2)$ and $\cI_\leq^\relat(\widehat{x}_1,\widehat{x}_2,\widehat{y}_1,\widehat{y}_2)$
      \item or $(\widehat{x}_1,\widehat{x}_2) = (\widehat{y}_1,\widehat{y}_2)$ and $(x_2,x_1) \leq (y_2,y_1)$ lexicographically.
    \end{itemize}
  \end{itemize}
  Finally, we claim that $\Image(f')$ is d-complete for $\Image(f)$.
  The part concerning the erasing morphism is immediate.
  For $\delta$-completeness (Definition~\ref{def:d-complete}), let $v\in\Image(f)$, i.e.\ $v = f(u) = f(u_1 \dots u_n)$ for some $u\in\Gamma^*$; then $f'(u_1\clubsuit u_2 \clubsuit\clubsuit \dots u_n\clubsuit^n) \in \Image(f')$ is of the required form.
\end{proof}

\begin{remark}
  The construction in our proof of Lemma~\ref{lem:engelfrieteries} preserves the class of interpretations that verify the property described in the proof of Theorem~\ref{thm:polyblind-refutation}.
\end{remark}

\bibliographystyle{alphaurl}
\bibliography{bib}

%%%%%%%%%%%%%%%%%%%%%%%%%%%%%%%%%%%%%%%%%%%%%%%%%%%%%%%%%%%%%%%%%%%%%%

\end{document}